\newtheorem{theorem}{Theorem}
\newtheorem{lemma}[theorem]{Lemma}
\newtheorem{claim}{Claim}
\newtheorem{corollary}{Corollary}[theorem]
\theoremstyle{definition}
\theoremstyle{remark}
\newcommand{\diff}{{\bf Diff}}
\newcommand{\glt}{{\bf GL.3}}
\newcommand{\kfour}{{\bf K4}}
\newcommand{\sfour}{{\bf S4}}
\newcommand{\sfive}{{\bf S5}}
\newcommand{\kt}{{\bf K3}}
\newcommand{\kft}{{\bf K4.3}}
\newcommand{\sft}{{\bf S4.3}}
\newcommand{\kkk}{{\bf K}}
\newcommand{\pskf}{{\bf K4}^-}
\newcommand{\finfone}{\varphi_\infty}
\newcommand{\finfoner}{\varphi_\infty^\bullet}
\newcommand{\finftwo}{\psi_\infty}
\newcommand{\diag}{p}
\newcommand{\next}{q}
\newcommand{\commut}{[L_0,L_1]}
\newcommand{\fusion}{L_0\oplus L_1}
\newcommand{\Log}{\sf Log}
\newcommand{\DhM}{\blacklozenge_0}
\newcommand{\Dh}{\Diamond_0}
\newcommand{\Dv}{\Diamond_1}
\newcommand{\BhM}{\blacksquare_0}
\newcommand{\Bh}{\Box_0}
\newcommand{\Bv}{\Box_1}
\newcommand{\Dhe}{\Dh^{\mbox{\tiny =}{\bf 1}}}
\newcommand{\Dhn}{\Dh^{\mbox{\tiny =}{\bf n}}}
\newcommand{\Dhnn}{\Dh^{\mbox{\tiny =}{\bf n+1}}}
\newcommand{\RhM}{\overline{R}_0^{\M}\!}
\newcommand{\Rh}{R_0}
\newcommand{\Rv}{R_1}
\newcommand{\wconM}{{\rm (wcon$^-$)}$^\M$}
\newcommand{\lcomM}{{\rm (lcom)}$^\M$}
\newcommand{\rcomM}{{\rm (rcom)}$^\M$}
\newcommand{\confM}{{\rm (conf)}$^\M$}
\newcommand{\wcon}{{\rm (wcon$^-$)}}
\newcommand{\lcom}{{\rm (lcom)}}
\newcommand{\rcom}{{\rm (rcom)}}
\newcommand{\conf}{{\rm (conf)}}
\newcommand{\F}{\mathfrak F}
\newcommand{\M}{\mathfrak M}
\newcommand{\Fr}{\sf Fr\,}
\newcommand{\auf}{(}
\newcommand{\zu}{)}
\newcommand{\mprod}{\!\times\!}
\begin{document}

  \title{Bimodal Logics with a `Weakly Connected' Component\\
  without the Finite Model Property}

  \author{Agi Kurucz\\[5pt]
{\small Department of Informatics}\\
    {\small King's College London}}
    
    \date\

\maketitle

\begin{abstract}
  There are two known general results on the finite model property (fmp) of commutators $\commut$ (bimodal logics with commuting
  and confluent modalities). 
  If $L$ is finitely axiomatisable by modal formulas having universal Horn first-order correspondents, then both
$[L,\kkk]$ and $[L,\sfive]$ are determined by classes of frames that
admit filtration, and so have the fmp. On the negative side, 
if both $L_0$ and $L_1$ are determined by transitive frames and have frames of arbitrarily large depth,
then $\commut$ does not have the fmp.
In this paper we show that commutators with a `weakly connected' component often lack the fmp.
Our results imply that the above positive result does not generalise to universally axiomatisable component logics,
and even commutators without `transitive' components such as $[\kt,\kkk]$ can lack the fmp.
We also generalise the above negative result to cases where one of the component logics has frames of
depth one only, such as $[\sft,\sfive]$ and the decidable product logic $\sft\mprod\sfive$. We also show cases when already half of commutativity is enough to force
infinite frames.
\end{abstract}

% main matter with bibliography goes here
%***************************************************************************************

\section{Introduction}\label{intro}

A normal multimodal logic $L$ is said to have the \emph{finite model property} (\emph{fmp}, for short), if for every
$L$-falsifiable formula $\varphi$, there is a finite model (or equivalently, a finite frame \cite{Segerberg71}) for $L$ where $\varphi$ fails to hold.
The fmp can be a useful tool in proving decidability and/or Kripke completeness of a multimodal logic. While in
general it is undecidable whether a finitely axiomatisable modal logic has the fmp \cite{cz93}, 
there are several general results on the fmp of
unimodal logics (see \cite{cz,WZ07} for surveys and references).
In particular, by Bull's theorem  \cite{Bull66} all extensions of $\sft$ have the fmp. $\sft$ is the finitely axiomatisable modal logic determined by frames $(W,R)$, where $R$ is reflexive, transitive and
\emph{weakly connected}:
\[
\forall x,y,z\in W\,\bigl(xRy\land xR z\to (y=z\lor yRz\lor zRy)\bigr).
\]
The property of
weak connectedness is a consequence of linearity, and so well-studied in temporal and dynamic logics, 
modal-like logical formalisms over point-based models of time and sequential computation \cite{Goldblatt87}.

Here we are interested in to what extent Bull's theorem holds in the bimodal case, that is, we study the fmp of 
bimodal logics with a weakly connected unimodal component. In general,
it is of course much more difficult to understand the behaviour of bimodal logics having two possibly differently behaving modal operators, especially when they interact. Without interaction, there is a general transfer theorem \cite{Fine&Schurz96,Kracht&Wolter91}:
If both $L_0$ and $L_1$ are modal logics having the fmp, then their \emph{fusion} (also known as \emph{independent join}) $\fusion$ also has the fmp.
Here we study bimodal logics with a certain kind of interaction.
Given unimodal logics $L_0$ and $L_1$, their \emph{commutator} $\commut$ is the smallest bimodal logic containing 
their fusion $\fusion$, plus the interaction axioms
\begin{equation}\label{commut}
\Bv\Bh p\to \Bh\Bv p,
\qquad \Bh\Bv p\to \Bv\Bh p,
\qquad \Dh\Bv p\to \Bv\Dh p.
\end{equation}
These bimodal formulas have the respective first-order frame-correspondents of \emph{left commutativity},
\emph{right commutativity}, and \emph{confluence} (or \emph{Church--Rosser property}):

\medskip
\noindent
\lcom\qquad
$\forall x,y,z\ (x\Rh y\Rv z\;\to\;\exists u\ x\Rv u\Rh z),$

\medskip
\noindent
\rcom\qquad
$\forall x,y,z\ (x\Rv y\Rh z\;\to\;\exists u\ x\Rh u\Rv z),$

\medskip
\noindent
\conf\qquad\,
$\forall x,y,z\ \bigl(x\Rv y\;\land\;x\Rh z\;\to\;\exists u\; (y\Rh u\;\land\; z\Rv u)\bigr).$

\medskip
\noindent
These three properties always hold in special two-dimensional structures called \emph{product frames}, and
so commutators always have product frames among their frames. 
Product frames are natural constructions modelling interaction between different domains that might represent
time, space, knowledge, actions, etc. Properties of product frames and \emph{product logics} (logics determined by 
classes of product frames) are extensively studied, see \cite{Gabbay&Shehtman98,gkwz03,Kurucz07} for surveys
and references.
Here we summarise the known results related to the finite model property of commutators and products:

(I) 
It is easy to find bimodal formulas that
`force' infinite ascending or descending chains of points in product frames under very mild assumptions
(see Section~\ref{defs} for details).
Therefore, commutators often do not have the \emph{fmp w.r.t.\ product frames}.
However, commutators and product logics
do have other frames, often ones that are not even p-morphic images of product frames,
or finite frames that are p-morphic images of infinite product frames only (see Section~\ref{defs}).
So in general the lack of fmp of a logic does not
obviously follow from the lack of fmp w.r.t.\ its product frames. In fact, there are known examples, say 
$[\kfour,\kkk]=\kfour\mprod\kkk$ and
$[\sfour,\sfive]=\sfour\mprod\sfive$, that do have the fmp, but lack the fmp w.r.t.\ product frames.

(II)
The above two examples are special cases of general results in \cite{Gabbay&Shehtman98,Shehtman04}:
If $L$ is finitely Horn axiomatisable 
(that is, finitely axiomatisable by modal formulas having universal Horn first-order correspondents), then both
$[L,\kkk]$ and $[L,\sfive]$ are determined by classes of frames that admit filtration, and so have the fmp. 

(III) 
Shehtman \cite{Shehtman14} shows that products of some modal logics of finite depth with both
$\sfive$ and $\diff$ have the fmp. He also obtains the fmp for the product logic $\diff\mprod\kkk$.

(IV)
On the negative side,
if both $L_0$ and $L_1$ are determined by transitive frames and have frames of arbitrarily large depth,
then no logic between $\commut$ and $L_0\mprod L_1$ has the fmp \cite{gkwz05a}. So for example, neither
$[\kft,\kft]$ nor $[\kft,\kfour]$ have the fmp.

(V)
Reynolds \cite{Reynolds97}  
considers the bimodal tense extension $\kft_t$ of $\kft$ as first component (that is, besides the usual `future' $\Box$, 
the language of $\kft_t$ contains a `past' modal operator as well, interpreted along the inverse of the accessibility 
relation of $\Box$).
He shows that the 3-modal product logic $\kft_t\mprod\sfive$ does not have the fmp.

\medskip
In this paper we show that commutators with a `weakly connected' component often lack the fmp.
Our results imply that (II) above cannot be generalised to component logics having weakly connected frames only: 
Even commutators without `transitive'
components such as $[\kt,\kkk]$ can lack the fmp (here $\kt$ is the logic determined by all --not necessarily transitive-- weakly connected frames). 
On the other hand, we generalise (IV) (and (V)) 
above for cases where one of the component logics have frames of modal
depth one only. In particular, we show (without using the `past' operator) that the (decidable  \cite{Reynolds97})  product logics $\kft\mprod\sfive$ and $\sft\mprod\sfive$ do not have the fmp.
Precise formulations of our results are given in Section~\ref{results}. These results give negative answers to questions in
\cite{Gabbay&Shehtman98}, and to Questions~6.43 and 6.62 in \cite{gkwz03}.

The structure of the paper is as follows. Section~\ref{defs} provides the relevant definitions and notation,
and we discuss the fmp w.r.t.\ product frames in more detail.
Our results are listed in Section~\ref{results}, and proved in Section~\ref{proofs}.
Finally, in Section~\ref{disc} we discuss the obtained results and formulate some open problems.

%****************************************************************************************************

\section{Bimodal logics and product frames}\label{defs}

In what follows we assume that the reader is familiar with the basic notions in modal logic
and its possible world semantics (for reference, see, e.g., \cite{Blackburnetal01,cz}).
Below we
summarise some of the necessary notions and notation for the bimodal case.
Similarly to (propositional) unimodal formulas,
by a \emph{bimodal} formula we mean any formula built up from propositional variables
using the Booleans and the unary modal operators $\Bh$, $\Bv$, and $\Dh$, $\Dv$. 
Bimodal formulas are evaluated in $2$-\emph{frames}: relational structures of the form
$\F=(W,R_0,R_1)$, having two binary relations $R_0$ and $R_1$ on a non-empty set $W$.
A \emph{Kripke model based on} $\F$ is a pair $\M=(\F,\vartheta)$, where $\vartheta$ is a
function mapping propositional variables to subsets of $W$. The \emph{truth relation}
`$\M,w\models\varphi$', connecting points in models and formulas, is defined as usual by induction 
on $\varphi$. 
%Given a set $\Sigma$ of bimodal formulas, we write $\M\models\Sigma$ if
%we have $\M,w\models\varphi$, for every $\varphi\in\Sigma$ and every $w\in W$. 
%(We write just $\M\models\varphi$ for $\M\models\{\varphi\}$.) 
We say that $\varphi$ is \emph{valid in} $\F$, 
if $\M,w\models\varphi$, for every model $\M$ based on $\F$ and
for every $w\in W$. 
%if $\M\models\varphi$ for every model $\M$ based on $\F$.  
If every formula
in a set $\Sigma$ is valid in $\F$, then we say that $\F$ is a \emph{frame for}  $\Sigma$.
We let $\Fr \Sigma$ denote the class of all frames for $\Sigma$.

A set $L$ of bimodal formulas is called a (normal) \emph{bimodal logic} (or \emph{logic}, for short)
if it contains all propositional tautologies and the formulas
$\Box_i(p\to q)\to(\Box_i p\to\Box_i q)$, for $i<2$,
and is closed  under the rules of Substitution, Modus Ponens and 
Necessitation $\varphi/\Box_i\varphi$, for $i<2$. Given a class $\mathcal{C}$ of $2$-frames, we always
obtain a logic by taking
\[
\Log\,\mathcal{C}=\{\varphi :\varphi\mbox{ is a bimodal formula valid in every member of }\mathcal{C}\}.
\]
We say that $\Log\,\mathcal{C}$ is \emph{determined by} $\mathcal{C}$, and call such a logic
\emph{Kripke complete}. (We write just $\Log\,\F$ for $\Log\,\{\F\}$.)
%
%Given a bimodal logic $L$ and a set $\Sigma$ of bimodal 
%formulas, we say that $\Sigma$ \emph{axiomatises} $L$ if
%$L$ is the smallest bimodal logic containing $\Sigma$.

Let $L_0$ and $L_1$ be two unimodal logics formulated using the same propositional variables and Booleans,
but having different modal operators ($\Dh$, $\Bh$ for $L_0$, and $\Dv$, $\Bv$ for $L_1$). Their \emph{fusion}
$\fusion$ is the smallest bimodal logic that contains both $L_0$ and $L_1$. The \emph{commutator} $\commut$
of $L_0$ and $L_1$ is the smallest bimodal logic that contains $\fusion$ and the formulas in \eqref{commut}.
Next, we introduce some special `two-dimensional' $2$-frames for commutators.
Given unimodal Kripke frames 
$\F_0=(W_0,R_0)$ and $\F_1=(W_1,R_1)$, their \emph{product} is defined to be
the $2$-frame
\[
\F_0\mprod\F_1= ( W_0\mprod W_1,\overline{R}_0,\overline{R}_1),
\]
where $W_0\mprod W_1$ is the Cartesian product of $W_0$ and $W_1$
and, for all $u,u'\in W_0$, $v,v'\in W_1$,
\begin{gather*}
(u,v) \overline{R}_0 (u',v')\quad \text{ iff }\quad uR_0u'\mbox{ and }v=v',\\
(u,v) \overline{R}_1 (u',v')\quad \text{ iff }\quad vR_1v' \mbox{ and }u=u'.
\end{gather*}
$2$-frames of this form will be called \emph{product frames} throughout.
For classes $\mathcal{C}_0$ and $\mathcal{C}_1$ of unimodal frames, we define
\[
\mathcal{C}_0\mprod\mathcal{C}_1=\{\F_0\mprod \F_1 : \F_i\in\mathcal{C}_i,\mbox{ for $i=0,1$}\}.
\]
Now, for $i<2$, let $L_i$ be a Kripke complete unimodal logic in the language with $\Diamond_i$ and $\Box_i$.
The \emph{product} of $L_0$ and $L_1$ is defined as the (Kripke complete) bimodal logic
\[
%L_0\times L_1 =\Log\,\{\F_0\times \F_1 : \F_i\mbox{ is a frame for }L_i,\mbox{ for }i<2\}.
L_0\times L_1 =\Log\,(\Fr L_0\mprod\Fr L_1).
\]
As we briefly discussed in Section~\ref{intro}, product frames always validate the formulas in \eqref{commut},
and so $\commut\subseteq L_0\mprod L_1$ always holds. 
If both $L_0$ and $L_1$ are Horn axiomatisable, then $\commut=L_0\mprod L_1$ \cite{Gabbay&Shehtman98}. 
In general, $\commut$ can be properly contained in $L_0\mprod L_1$. In particular,
the universal (but not Horn) property of weak connectedness can result in such behaviour:
$[\kft,\kkk]$ is properly contained in the non-finitely axiomatisable $\kft\mprod\kkk$ \cite{km12},
see \cite[Thms.5.15,\,5.17]{gkwz03} and \cite{hk14} for more examples
(here $\kkk$ and $\kft$ denote the unimodal logics determined, respectively, by all frames, and by all transitive and
weakly connected frames).

It is not hard to force infinity in product frames.
The following formula \cite[Thm.5.32]{gkwz03} forces an infinite ascending $\overline{R}_0$-chain of distinct points
in product frames with a transitive first component:
\begin{equation}\label{fasc}
\Bh^+\Dv p\land \Bh^+\Bv(p\to \Dh\Bh^+\neg p)
\end{equation}
(here $\Bh^+\psi$ is  shorthand for $\psi\land\Bh\psi$).
Also, the formula 
\begin{equation}\label{fdesc}
\Dv\Dh p\land\Bv(\Dh p\to \Dh\Dh p)\land \Bv\Bh(p\to \Bh\neg p) \land \Bh\Dv p
\end{equation}
forces a rooted infinite descending $\overline{R}_0$-chain of points in product frames with a transitive and weakly connected first component
(see  \cite[Thm.6.12]{gabelaiaphd} for a similar formula).
It is not hard to see that both \eqref{fasc} and \eqref{fdesc} can be satisfied in infinite product frames, where the second component is
a \emph{one-step rooted frame} $(W,R)$ (that is, there is $r\in W$ such that $rRw$ for every $w\in W$, $w\ne r$).
As a consequence,
a wide range of bimodal logics fail to have the \emph{fmp w.r.t.\ product frames\/}. 
If every finite frame for a logic is the p-morpic image of one of its finite product frames, then the lack of fmp
follows. As is shown in  \cite{gabelaiaphd}, such examples are the logics $[\glt,L]$ and $\glt\mprod L$, 
for any $L$ having one-step rooted frames
(here $\glt$ is the logic determined by all Noetherian strict linear orders).
However, in general this is not the case for bimodal logics with frames having weakly connected components.
Take, say, the 2-frame $\F=\auf W,\leq,W\mprod W\zu$,
where $W=\{x,y\}$ and $x\leq x\leq y\leq y$. Then it is easy to see that $\F$ is a p-morphic image of 
$\auf\omega,\leq\zu\times\auf\omega,\omega\times\omega\zu$,
but $\F$ is not a p-morphic image of any finite product frame.

%****************************************************************************************************

\section{Results}\label{results}

We denote by $\kt$  the unimodal logic determined by all
weakly connected (but not necessarily transitive) frames.
%If a frame $\F = \auf W, R\zu$ has a subframe isomorphic to $\auf\omega + 1, >\zu$, then
%we say that $\F$ \emph{contains an\/} $\auf\omega + 1, >\zu$-\emph{type chain\/}.
%
%We call $\F$ 1-\emph{step rooted\/}, if there is $r\in W$ such that $rRw$ for every $w\in W$, $w\ne r$.

\begin{theorem}\label{t:wcxk}
Let $L$ be a bimodal logic such that
\begin{itemize}
\item
$[\kt,\kkk]\subseteq L$, and
\item
$\auf\omega+1,>\zu\mprod\F$ is a frame for $L$, where
$\F$ is a countably infinite one-step rooted frame.
\end{itemize}
Then $L$ does not have the finite model property.
\end{theorem}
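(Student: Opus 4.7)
The plan is the standard one for failure of fmp: I will construct a bimodal formula $\varphi$ that is satisfiable at some point of the frame $\auf\omega+1,>\zu\mprod\F$ (hence $L$-consistent by the second hypothesis), yet is falsifiable in every finite frame for $L$. The formula will be in the spirit of \eqref{fdesc}, which is known to force an infinite descending $\Rh$-chain in product frames whose first component is transitive and weakly connected. The new difficulty forced by $[\kt,\kkk]\subseteq L$ is that an arbitrary frame for $L$ need not be transitive in its first component; only weak connectedness of $\Rh$ and the three commutator properties \lcom, \rcom, \conf\ between $\Rh$ and $\Rv$ are guaranteed. Hence the infinity-forcing mechanism in $\varphi$ must rely on weak connectedness together with the commutator axioms, instead of iterating $\Dh$ under transitivity as in \eqref{fdesc}.

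The idea for $\varphi$ is to transport the role of ``depth along $\Rh$'' into the $\Rv$-dimension: the formula will force, at the root, an infinite family of pairwise distinguishable $\Rh$-successors, where the distinguishability is witnessed through $\Rv$-labels rather than through iterated $\Rh$-depth. A small number of propositional variables suffices because, in the intended model, the infinitely many $\Rv$-successors of the root of $\F$ supply an inexhaustible pool of distinct witnesses; confluence and commutativity then synchronise these $\Rv$-witnesses with $\Rh$-successors of the root. For satisfiability in $\auf\omega+1,>\zu\mprod\F$ I will define an explicit valuation at $(\omega,r)$, where $r$ is the root of $\F$, and verify each conjunct by unpacking the product-frame definition; the point $(n,r)$ will play the role of the $n$th $\Rh$-successor and $(\omega,w)$ (for each non-root $w$ of $\F$) the role of the corresponding $\Rv$-label.

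The main step, and the principal obstacle, is showing that $\varphi$ cannot be satisfied in any finite frame for $L$. Given $\M,x_0\models\varphi$ in a model $\M$ on a frame $\G=(W,\Rh,\Rv)$ for $L$, I will extract inductively a sequence $y_0,y_1,\dots$ of pairwise distinct $\Rh$-successors of $x_0$; the conjuncts of $\varphi$ will, at each stage, produce via the three commutator axioms a new $\Rh$-successor $y_{n+1}$ of $x_0$ together with a $\Rv$-label that differs from the labels at $y_0,\dots,y_n$. Weak connectedness of $\Rh$ will force the $y_n$ to align into a single $\Rh$-chain, and the pairwise different labels will rule out any collapse. The delicate point is to design these labels so that their distinctness is forced already by weak connectedness and \lcom, \rcom, \conf\ --- with no recourse to transitivity of $\Rh$ --- and this is precisely where the argument has to depart from the transitive template of \eqref{fdesc} and where I expect most of the bookkeeping to live.
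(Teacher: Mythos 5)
Your overall strategy (a single formula satisfiable in $\auf\omega+1,>\zu\mprod\F$ but forcing infinity in every frame for $L$) is the right one and matches the paper, and you correctly identify the obstacle: a frame for $[\kt,\kkk]$ need not have a transitive first component, so the template \eqref{fdesc} cannot be used as is. But the proposal stops exactly where the proof has to begin: the formula is never written down, and the argument that it forces infinity in an arbitrary finite frame is deferred to ``bookkeeping'' you ``expect'' to live somewhere. That argument is the entire content of the theorem, so as it stands there is no proof.

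More importantly, the one concrete design decision you do commit to points the wrong way. You propose to certify that the $\Rh$-successors $y_0,y_1,\dots$ of $x_0$ are pairwise distinct by attaching ``pairwise different $\Rv$-labels'' drawn from ``an inexhaustible pool of distinct witnesses'', explicitly in place of ``iterated $\Rh$-depth''. In an arbitrary finite frame for $L$ there is no inexhaustible pool: a fixed formula over finitely many variables can define only finitely many $\Rv$-labels, so pairwise-distinct labels cannot separate infinitely many points; you are conflating the intended product model (where distinct points of $\F$ really do supply distinct witnesses) with an arbitrary model of the formula. What is needed is an infinite family of distinguishing formulas of unbounded modal depth, and the paper's choice is precisely the iterated-depth family $\Dh^{n}\top\land\Bh^{n+1}\bot$: the formula $\finfone$, built from \eqref{init}--\eqref{vgen} with a single diagonally-placed variable $\diag$, forces points $u_n$ with $\M,u_n\models\Dh^{n}\top\land\Bh^{n+1}\bot$, which are then automatically pairwise distinct. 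The $\Rv$-dimension does not label the $u_n$; its job is to generate a fresh candidate $u_{n+1}$ at each stage (via $\Dv\Dhe\diag$ and \eqref{vgen}), and the weakened consequence \wcon\ of weak connectedness, together with \lcom, \rcom\ and \conf, is what propagates the exact depth from $u_n$ through $v_n$ and $x_n$ to $u_{n+1}$ and increases it by one --- no transitivity is required for this step. So the depth-counting device you set out to avoid is in fact the mechanism that makes the non-transitive case work; the effort should go into reconstructing that invariant rather than into designing $\Rv$-labels.
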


Weak connectedness is a property of linear orders, and
$\auf\omega+1,>\zu$ is a frame for $\kft$.
Most `standard' modal logics have infinite one-step rooted frames, in particular, $\sfive$ (the logic of all equivalence frames),
and $\diff$ (the logic of all \emph{difference frames} $(W,\ne)$). So we have:

\begin{corollary}
Let $L_0$ be either $\kt$ or $\kft$, and $L_1$ be any of 
$\kkk$, $\sfive$, $\diff$. Then no logic between $\commut$ and $L_0\mprod L_1$ has the fmp.
\end{corollary}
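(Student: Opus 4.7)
The plan is to produce a bimodal formula $\varphi$ that is satisfied at some point of the infinite product frame $\auf\omega+1,>\zu\mprod\F$ (a frame for $L$ by the second hypothesis) but is refutable in every finite frame for $L$. Because $[\kt,\kkk]\subseteq L$, every frame $\G=(W,\Rh,\Rv)$ for $L$ has a weakly connected $\Rh$ and satisfies the three first-order conditions \lcom, \rcom, \conf\ coming from the commutator axioms. The task therefore splits into (a) exhibiting a valuation on the product frame that realises $\varphi$, and (b) showing that $\varphi$ forces infinitely many distinct points in any frame for $L$.

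For the design of $\varphi$ I would exploit the countable $\Rv$-fan at the root $(\omega,r)$ of the product frame: by one-step rootedness of $\F$ there are infinitely many $\Rv$-successors $(\omega,v)$. The formula attaches to such successors a \emph{local modal signature} built from a small propositional vocabulary using bounded-depth $\Bh$ and $\Dh$; it then propagates by $\Bv$-conjuncts, saying informally ``there is an $\Rv$-successor with signature $s_0$; at every signature $s_n$ there is a further $\Rv$-successor with a new signature $s_{n+1}$''. In the product frame one exhibits $\varphi$ by labelling the $\Rv$-successors $(\omega,v_0),(\omega,v_1),\dots$ so that the chosen variables take the required pattern on the $\Rh$-fibre $\{(n,v):n<\omega\}$ below each of them. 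For (b), one assumes $\G,x_0\models\varphi$ and reads off from the $\Bv$-propagating conjuncts an infinite sequence of $\Rv$-successors $z_0,z_1,\dots$ of $x_0$ with pairwise distinct signatures. A single application of \conf\ to each $z_n$ produces an $\Rh$-successor $y_n$ of $x_0$ linked to a witness of the $n$-th signature by an $\Rv$-edge, and because the signatures are detectable by $\Rh$-/$\Rv$-modal formulas of uniformly bounded depth, $y_n\ne y_m$ whenever $n\ne m$. Weak connectedness of $\Rh$ at $x_0$ then orders the $y_n$ along a single $\Rh$-chain below $x_0$, contradicting the finiteness of $W$.

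The main obstacle is the absence of transitivity: only $\kt$, not $\kft$, is assumed, so $\Rh$-successors of $\Rh$-successors of $x_0$ need not themselves be $\Rh$-successors of $x_0$. This rules out imitating the infinite-descent formula \eqref{fdesc} along $\Rh$. Instead, the signatures must be \emph{local}, and the whole induction has to be carried out inside the $\Rv$-dimension (where the root of the product frame has infinite branching), transporting the results to the $\Rh$-fibre over $x_0$ by a single application of \conf\ or commutativity. Designing signatures that are simultaneously (i) detectable by a bounded-depth modal subformula, (ii) generated inductively by a $\Bv$-propagating formula of finite size, and (iii) pairwise distinguishable so as to force distinct $y_n$ after transport, is the delicate combinatorial heart of the proof; the verification that the intended valuation on $\auf\omega+1,>\zu\mprod\F$ does not accidentally falsify the propagating conjuncts through unintended $\Rh$-edges is then the routine bookkeeping step of the satisfiability half.
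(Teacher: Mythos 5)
Your overall architecture is the same as the paper's: a single formula, satisfied in $\auf\omega+1,>\zu\mprod\F$ (hence consistent with $L_0\mprod L_1$), that forces infinitely many points in any $2$-frame whose first relation is weakly connected and whose relations commute and are confluent; this is exactly the content of Theorem~\ref{t:wcxk} and its two lemmas. But the proposal stops short of the actual construction --- you explicitly defer ``the delicate combinatorial heart'' --- and, more seriously, the specification you give for that missing construction cannot be met. You require infinitely many pairwise distinguishable signatures, each ``detectable by $\Rh$-/$\Rv$-modal formulas of uniformly bounded depth'' over ``a small propositional vocabulary''. Over a finite vocabulary there are, up to logical equivalence, only finitely many modal formulas of any fixed bounded depth, hence only finitely many semantically distinct such signatures; so conditions (i) and (iii) of your plan are jointly unsatisfiable, and the step ``$y_n\ne y_m$ whenever $n\ne m$'' has no basis. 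The paper escapes this trap by distinguishing the forced points $u_n$ with formulas $\Dh^{n}\top\land\Bh^{n+1}\bot$ of \emph{unbounded} depth: what separates $u_n$ from $u_m$ is the exact length of the longest outgoing $\Rh$-path, a quantity that the fixed finite axiom $\finfone$ increments by one at each inductive round (via the conjuncts \eqref{hgen} and \eqref{vgen}) rather than naming explicitly.

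A secondary misjudgement: you argue that the absence of transitivity ``rules out imitating the infinite-descent formula \eqref{fdesc} along $\Rh$'', but the paper's $\finfone$ is precisely a refinement of \eqref{fdesc}. Non-transitivity is handled not by abandoning descent along $\Rh$ but by anchoring the chain at an $\Rh$-endpoint ($\Bh\bot$ in \eqref{init}), using the ``exactly one $\Rh$-step'' operator $\Dhe$, and invoking only the weakened consequence \wcon\ of weak connectedness, which still lets one place each new witness $u_{n+1}$ strictly ``above'' $u_n$ in path-length. Your instinct to run the induction through the $\Rv$-dimension and transport back with \lcom, \rcom\ and \conf\ is correct and matches the zigzag $r\Rh x_n\Rv v_n\Rh u_n$ of Lemma~\ref{l:infone}, but without a workable distinguishing mechanism for the $u_n$ the argument does not go through as proposed.
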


However, $\auf\omega+1,>\zu$ is not a frame for `linear' logics whose frames are serial, reflexive and/or dense, 
such as $\Log\,(\omega,<)$, $\sft$, or the logic $\Log\,(\mathbb Q,<)=\Log\,(\mathbb R,<)$ 
of the usual orders over the rationals or the reals.
 Our next theorem deals with these kinds of logics as first components.
We say that a frame $\F = \auf W, R\zu$ \emph{contains an\/} $\auf\omega + 1, >\zu$-\emph{type chain\/},
if there are distinct points $x_n$, for $n\leq\omega$, in $W$ such that  $x_nRx_m$ iff $n>m$, 
for all $n,m\leq\omega$, $n\ne m$. 
Observe that this is less than saying that $\F$ has a subframe isomorphic to $\auf\omega + 1, >\zu$,
as for each $n$, $x_n R x_n$ might or might not hold. So $\F$ can be reflexive and/or dense, and still have this property.

\begin{theorem}\label{t:linxsfive}
Let $L$ be a bimodal logic such that
\begin{itemize}
\item
$[\kft,\kkk]\subseteq L$, and
\item
$\F_0\mprod\F_1$ is a frame for $L$, where $\F_0$ contains an $\auf\omega+1,>\zu$-type chain, and 
$\F_1$ is a countably infinite one-step rooted frame.
\end{itemize}
Then $L$ does not have the finite model property.
\end{theorem}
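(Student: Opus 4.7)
The plan is to mimic the argument by which formula \eqref{fdesc} forces a rooted infinite descending $\Rh$-chain in product frames with a transitive, weakly connected first component, but now inside an arbitrary frame for $L$. Since $[\kft,\kkk]\subseteq L$, the relation $\Rh$ of any frame for $L$ is transitive and weakly connected, and the commutator axioms \lcom, \rcom, \conf\ are available for transferring information between $\Rh$- and $\Rv$-successors. I would introduce a bimodal formula $\finftwo$, adapted from \eqref{fdesc}, that is (i) satisfiable in $\F_0\mprod\F_1$ and (ii) forces an $\auf\omega+1,>\zu$-type $\Rh$-chain of distinct points in every model based on a frame for $L$. These two facts together immediately yield the failure of the fmp.

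For (i), let $\{x_n\}_{n\le\omega}$ be the $\auf\omega+1,>\zu$-type chain in $\F_0$, let $r$ be the root of $\F_1$, and let $\{y_n\}_{n<\omega}$ enumerate the remaining points of $\F_1$. Evaluating $\finftwo$ at $(x_\omega,r)$ with the propositional variable $p$ made true exactly at the `diagonal' points $(x_n,y_n)$ reduces to routine checks: the conjuncts $\Dv\Dh p$, $\Bv(\Dh p\to\Dh\Dh p)$ and $\Bh\Dv p$ all hold because every finite level has strictly later levels $\Rh$-below it. The fourth conjunct $\Bv\Bh(p\to\Bh\neg p)$ of \eqref{fdesc} fails outright if $\F_0$ has reflexive chain points (as it must when we want to cover logics like $\sft$ as first component), so I would replace it by a weaker conjunct involving an auxiliary variable $q$ marking strictly lower levels of the chain, keeping just enough to distinguish consecutive $p$-points without forcing irreflexivity of $p$-points themselves.

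For (ii), given any model $\M=(\G,\vartheta)$ with $\G\in\Fr L$ and $\M,w_0\models\finftwo$, I would build inductively a sequence of distinct points $\dots\Rh u_2\Rh u_1$ with $u_1\models p$. From $\Dv\Dh p$ we get $w_0\Rv w_1\Rh u_1$ with $u_1\models p$; the conjunct $\Bv(\Dh p\to\Dh\Dh p)$ then gives $w_1\Rh u_2\Rh u_1'$ with $u_1'\models p$, and the new distinctness conjunct together with transitivity and weak connectedness of $\Rh$ at $w_1$ forces $u_1=u_1'$ and $u_2\ne u_1$. To iterate, I would apply \rcom\ to $w_0\Rv w_1\Rh u_2$, use $\Bh\Dv p$ at $w_0$ to produce a fresh $p$-witness and pull it back via \lcom\ to obtain a new $\Rv$-successor $w_2$ of $w_0$ with $w_2\Rh u_3\Rh u_2$, and so on.

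The main obstacle is the distinctness argument. Weak connectedness alone only says that any two $\Rh$-successors of a common point are comparable or equal, so at each stage one must exclude both $u_{n+1}=u_i$ and the `wrong direction' $u_i\Rh u_{n+1}$ for previously built $u_i$, all while the relevant $\Rh$-points live in different `columns' generated by the commutator axioms. In the reflexive setting this is delicate, which is exactly why the auxiliary marker $q$ replaces the $\Bh\neg p$ part of \eqref{fdesc}; making these markers play well with \lcom, \rcom\ and \conf\ through the iteration is where most of the work will go. Once the $\auf\omega+1,>\zu$-type chain is produced, $\G$ is forced to be infinite, and combined with the satisfaction of $\finftwo$ in $\F_0\mprod\F_1\in\Fr L$, this shows that $L$ has no finite model property.
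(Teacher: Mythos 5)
Your high-level strategy matches the paper's: refine the descending-chain formula \eqref{fdesc}, satisfy it in $\F_0\mprod\F_1$ along a diagonal over the $\auf\omega+1,>\zu$-type chain, and show it forces infinitely many distinct points in any frame for $L$. However, the step you yourself identify as ``where most of the work will go'' --- replacing $\Bv\Bh(p\to\Bh\neg p)$ by something that survives reflexivity and density while still separating the $u_n$ --- is precisely the content of the paper's proof, and your sketch does not supply it. A single auxiliary variable $q$ ``marking strictly lower levels'' cannot by itself distinguish $\omega$ many points: in the paper's argument for the irreflexive case (Lemma~\ref{l:infone}), distinctness comes from proving $\M,u_n\models\Dhn\top$ by induction, i.e.\ from counting exact $\Rh$-depth, and in a reflexive or dense $\F_0$ every formula $\Bh^{n+1}\bot$ is unsatisfiable, so this counting mechanism collapses. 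What the paper does instead is the ``tick trick'': it fixes a variable $t$, defines a derived modality $\DhM$ (``reach a point of opposite $t$-colour, possibly then go further'') and the corresponding derived relation $\RhM$, and proves three separate claims --- that $\RhM$ is transitive and interacts correctly with $\Rh$ (Claim~\ref{c:trans}), that it inherits a weak-connectedness property \wconM\ from $\Rh$ (Claim~\ref{c:wcon}), and that \lcom, \rcom, \conf\ transfer to $\RhM$ and $\Rv$ relative to a root satisfying the tick axiom \eqref{tick} (Claim~\ref{c:comm}). Only with these in hand can the whole of Lemma~\ref{l:infone}, including the depth-counting via $\blacklozenge_0^{\mbox{\tiny =}{\bf n}}$, be rerun verbatim with $\RhM$ in place of $\Rh$. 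None of your proposed apparatus with $q$ gives you a strict, transitive, weakly connected surrogate for $\Rh$, nor the transfer of the commutation properties to it, and without that the distinctness induction does not go through.

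On the satisfaction side there is a related omission: you place $p$ on the diagonal points $(x_n,y_n)$, but once the derived modality is in play the valuation must also realise the alternating colouring. The paper partitions $\F_0$ into the intervals $[x_{n+1},x_n)$ determined by the chain (these need not be singletons, since $\F_0$ may be dense and contain points off the chain), makes $t$ true exactly on the odd-indexed intervals, and makes $p$ constant on each interval in the appropriate column; checking $\finfoner$ at $\auf x_\omega,r\zu$ then uses transitivity and weak connectedness of $\F_0$ to see that every $\Rh$-successor of a chain point lands in one of these intervals. Your Step~(i) as written only covers the case where the chain points are all of $\F_0$ and the first component is irreflexive, which is exactly the case already handled by Theorem~\ref{t:wcxk}. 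So the proposal is a correct statement of the problem to be solved rather than a proof: the essential new idea of Theorem~\ref{t:linxsfive} is missing.
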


\begin{corollary}
Let $L_0$ be any of $\Log (\omega,<)$,  $\Log (\omega,\leq)$, $\sft$, $\Log (\mathbb Q,<)$,  and $L_1$ be any of 
$\kkk$, $\sfive$, $\diff$. Then no logic between $\commut$ and $L_0\mprod L_1$ has the fmp.
\end{corollary}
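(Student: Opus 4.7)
The plan is to exhibit a bimodal formula $\chi$ that is satisfiable in some model on $\F_0 \mprod \F_1$ --- hence $L$-consistent by the second hypothesis --- but is unsatisfiable in every model whose underlying frame is a finite frame for $L$. Such a $\chi$ immediately witnesses the failure of the fmp for $L$, since $\neg \chi$ would then be valid in all finite $L$-frames but not in $L$ itself.

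The starting point for the construction of $\chi$ is the descending-chain formula \eqref{fdesc} of Section~\ref{defs}. The obstacle in using \eqref{fdesc} verbatim is the conjunct $\Bv \Bh (p \to \Bh \neg p)$: at a $p$-marked point with a reflexive $\Rh$-edge it forces $\neg p$ at the point itself, which is a contradiction. Since the intended applications include reflexive or dense first components such as $\sft$, $\Log(\omega, \leq)$, and $\Log(\mathbb Q, <)$, this rules out the naive approach. My modification would replace the atomic ``$p$-versus-$\neg p$'' separation of chain points by modal separation along the second dimension: chain points would be tagged by distinctive $\Bv$-profiles over $p$ and a few auxiliary variables, so that distinct chain points can be told apart by their $\Bv$-types over the finite set of subformulas of $\chi$ even when the underlying first-component points are $\Rh$-reflexive. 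Under the commutator axioms such profiles propagate along $\Rh$ in a product-like manner, so the chain-forcing conjuncts of \eqref{fdesc} can be rewritten in $\Bv$-signature form without losing their infinite-chain effect.

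For satisfiability in $\F_0 \mprod \F_1$ I would pick pairwise distinct $v_0, v_1, \dots \in W_1 \setminus \{r\}$ --- available because $W_1$ is countably infinite and one-step rooted from $r$ --- and define the valuation of $p$ and of the auxiliary variables so that the diagonal cells $(x_n, v_n)$ carry the intended distinct markers. A direct inspection at $(x_\omega, r)$ then shows that every conjunct of $\chi$ holds: the $\auf \omega+1, >\zu$-type chain supplies the infinite strictly $\Rh$-descending sequence of successors from $(x_\omega, r)$, while one-step rootedness of $\F_1$ makes each diagonal cell accessible from $(x_n, r)$ in a single $\Rv$-step.

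The delicate step is non-satisfiability in a finite $L$-frame. Suppose $\chi$ holds at some $w$ in a model $\M$ on a finite frame $\G$ for $L$. The chain-forcing conjuncts yield an infinite $\Rh$-reachable sequence $w_0, w_1, \dots$ in $\G$. Using $\kft \subseteq L$ (transitivity and weak connectedness of $\Rh$) together with the commutator axioms, I would show that the $\Bv$-signatures of the $w_n$ over the subformulas of $\chi$ are strictly descending in a well-founded ordering induced by $\chi$, so the $w_n$ are pairwise distinct points of $\G$ --- contradicting finiteness. The principal obstacle, and the technical heart of the proof, is designing the signature mechanism so that the strict-descent property survives in an arbitrary $L$-frame rather than only in a product frame, and is robust under reflexive or dense $\Rh$. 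This amounts to a modal-theoretic reworking of the irreflexivity trick built into \eqref{fdesc}, and will require the commutator axioms to be exploited carefully to force consistent propagation of signatures along the chain.
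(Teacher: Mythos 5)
Your overall strategy---a single formula $\chi$ satisfiable in a suitable product frame for $L$ but satisfiable only in infinite frames---is the right one, and you correctly identify the obstacle: the conjunct $\Bv\Bh(p\to\Bh\neg p)$ of \eqref{fdesc} is contradictory at $\Rh$-reflexive points, so the reflexive and dense logics $\sft$, $\Log\,(\omega,\leq)$, $\Log\,(\mathbb Q,<)$ need a different device. But the device you propose is never constructed, and the one concrete feature you commit to cannot work: you want the infinitely many points $w_0,w_1,\dots$ of a putative finite frame to be separated by their ``$\Bv$-types over the finite set of subformulas of $\chi$'', arranged in a strictly descending well-founded order. A finite set of formulas admits only finitely many Boolean types, so infinitely many points cannot carry pairwise distinct such types, and no strict order on that finite set has infinite descending chains. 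Any separation argument driven by a single finite formula must generate an \emph{unbounded} family of distinguishing formulas lying outside the subformulas of $\chi$; in the paper these are the depth formulas $\Dh^{n}\top\land\Bh^{n+1}\bot$ (and their relativised versions), produced in the induction of the infinity lemma rather than read off from $\chi$. (Separation via the second dimension is possible---see the formulas $\chi_n$ in the proof of Lemma~\ref{l:inftwo}---but those too have unbounded modal depth.)

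The paper's actual repair is the ``tick trick'' in the proof of Theorem~\ref{t:linxsfive}, from which this corollary follows: a variable $t$ alternates along the $\auf\omega+1,>\zu$-type chain in the \emph{first} coordinate, and one works with the derived diamond $\DhM$, whose induced relation $\RhM$ relates $x$ only to points reached after crossing a $t$-boundary. This makes $\RhM$ progressive even when $\Rh$ is reflexive or dense, and Claims~\ref{c:wcon} and \ref{c:comm} show that transitivity, a surrogate of weak connectedness, and commutation/confluence transfer from $\Rh$ to $\RhM$ (the latter only relative to a root satisfying \eqref{tick}), so that the proof of Lemma~\ref{l:infone} goes through with $\RhM$ in place of $\Rh$. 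You would need to supply an equivalent mechanism and actually verify the analogues of these claims in an arbitrary (non-product) frame for $[\kft,\kkk]$, which is precisely the part your sketch defers. Finally, to obtain the corollary you must also check that each listed $L_0$ has \emph{some} frame containing an $\auf\omega+1,>\zu$-type chain: this is immediate for $\sft$ and $\Log\,(\mathbb Q,<)$, but for $\Log\,(\omega,<)$ and $\Log\,(\omega,\leq)$ it rests on the observation that the unimodal language cannot capture discreteness, since $(\omega,<)$ itself contains no such chain.
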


Our last theorem is about bimodal logics having less interaction than commutators.
Let 
$[L_0,L_1]^{\textit{lcom}}$ denote the smallest bimodal logic containing $L_0\oplus L_1$ and 
$\Bv\Bh p\to \Bh\Bv p$.
We denote by $\pskf$ the unimodal logic determined by all frames that are \emph{pseudo-transitive}:
\[
\forall x,y,z\in W\,\bigl(xRyRz \to (x=z\lor xRz)\bigr).
\]
Difference frames $(W,\ne)$ are examples of pseudo-transitive frames where
the accessibility relation $\ne$ is also symmetric. (Note that in 2-frames with a symmetric second relation,
(rcom) is equivalent to (conf).)

\begin{theorem}\label{t:linxdiff}
Let $L$ be a bimodal logic such that
\begin{itemize}
\item
$[\kt,\pskf]^{\textit{lcom}}\subseteq L$, and

\item
$\auf\omega + 1, >\zu\mprod\auf\omega,\ne\zu$ is a frame for $L$.
\end{itemize}
Then $L$ does not have the finite model property.
\end{theorem}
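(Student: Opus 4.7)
The plan is to adapt the strategy used for Theorems~\ref{t:wcxk} and \ref{t:linxsfive} to this weaker setting: I will construct a bimodal formula $\varphi$, using a fixed finite number of propositional variables, such that $\varphi$ is satisfiable in the product frame $\auf\omega+1,>\zu\mprod\auf\omega,\ne\zu$ but fails on every finite frame for $L$. Satisfiability in the product frame will be witnessed by an explicit valuation at the root $\auf\omega,0\zu$; since the product has much more structure than $L$ requires (full commutativity and confluence, plus transitivity of $R_0$ and symmetry of $R_1$), once $\varphi$ is designed this part reduces to a direct check.

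The formula $\varphi$ will be designed so that, in any model based on an $L$-frame with root $r$, the point $r$ is forced to have infinitely many pairwise distinct $R_0$-successors. I will use a small number of propositional variables that cycle along the $R_0$-direction in order to distinguish successive $R_0$-levels without relying on $R_0$-transitivity, together with a ``diagnostic'' variable whose different $R_1$-witnessed values label different ``columns''. At $r$, $\varphi$ asserts that every relevant diagnostic value is realised at some $R_1$-successor that in turn has an $R_0$-successor of the prescribed cyclic type; conversely, $\varphi$ will rule out identifying two such $R_0$-successors carrying different diagnostics. Weak connectedness of $R_0$ then linearly pre-orders the $R_0$-successors of $r$.

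The distinctness of these $R_0$-successors is where \lcom\ and pseudo-transitivity of $R_1$ come in. Left commutativity, via $\Bv\Bh p\to\Bh\Bv p$, transports any property asserted uniformly over the root's $R_1 R_0$-image onto every $R_0 R_1$-reachable point: given $r R_0 y$ and any $R_1$-successor $z$ of $y$, we can force $z$ to carry exactly the diagnostic pattern originally specified at the $R_1$-successors of $r$. Pseudo-transitivity of $R_1$ then guarantees that two $R_1$-successors of the same point either coincide or are themselves $R_1$-connected; combined with the cyclic level markers this shows that $R_0$-successors of $r$ carrying distinct diagnostics cannot be identified, and iterating produces an infinite $R_0$-chain below $r$.

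The main obstacle I anticipate is the combined weakness of the hypotheses on $L$: no $R_0$-transitivity, no \rcom, no \conf, and only the soft substitute for $R_1$-transitivity provided by pseudo-transitivity. Consequently the formulas \eqref{fasc} and \eqref{fdesc} cannot be used off the shelf: the iterated $\Bh^+$-boxes must be replaced by a finite-variable cycling scheme, and every use of \rcom\ or \conf\ must be re-routed through the one-sided transport provided by \lcom\ in concert with pseudo-transitivity. I expect the subtlest step to be the distinctness argument itself, which has to be calibrated finely enough that weak connectedness cannot collapse the generated $R_0$-chain in a finite frame, while still leaving $\varphi$ satisfiable in the (transitive, fully commutative) product frame.
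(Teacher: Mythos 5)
Your overall two-lemma strategy is exactly the paper's: exhibit a formula satisfiable in $\auf\omega+1,>\zu\mprod\auf\omega,\ne\zu$ that forces any frame with a weakly connected $\Rh$, pseudo-transitive $\Rv$ and \lcom\ to be infinite. But the core of the argument is missing, and the one concrete device you offer in its place would fail. A ``finite-variable cycling scheme'' along the $\Rh$-direction can separate a point from its immediate neighbours, but with finitely many variables there are only finitely many propositional types, so cycling labels cannot pairwise distinguish infinitely many generated points; in a finite frame the construction could simply wrap around and revisit old points carrying the same labels. What is needed (and what the paper supplies) is a family of \emph{formulas of unbounded modal depth} that the generated points satisfy in a strictly increasing way. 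Concretely, the paper's $\finftwo$ uses two variables $\diag$ and $\next$ to build a zigzag $y_n\Rh v_n\Rh u_n$, $v_n\Rv u_{n+1}\Rv v_n$, and then separates the $u_n$ by $\chi_0=\Bv\neg\next$, $\chi_n=\Dv\bigl(\next\land\Dh(\diag\land\chi_{n-1})\bigr)$, proving $u_n\models\chi_n\land\bigwedge_{i<n}\neg\chi_i$ by induction. That induction in turn rests on two uniqueness facts you have no analogue of: from $u_{n+1}$ the \emph{only} $\Rv$-successor satisfying $\next$ is $v_n$ (by pseudo-transitivity against $v_n\models\Bv\neg\next$), and from $v_n$ the only $\Rh$-successor satisfying $\diag$ is $u_n$ (by weak connectedness and the conjunct $\Bv^+\Bh\Bh(\diag\to\Bh\neg\diag)$).

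A second, related gap: your plan speaks of forcing infinitely many distinct $\Rh$-successors of the root $r$, linearly pre-ordered by weak connectedness. With only \lcom\ available (no \rcom, no \conf), one cannot transport witnesses back into a single row; the paper's infinite set $\{u_n\}$ is spread along a diagonal, each $u_n$ being an $\Rh$-successor of a different $\Rv$-successor $y_n$ of $r$, with pseudo-transitivity of $\Rv$ keeping each new $y_{n+1}$ within $\Rv$-reach of $r$ so that the $\Bv^+$-prefixed conjuncts keep applying. You correctly identify that the distinctness argument is the subtle step, but the proposal does not actually contain it, so as it stands the proof is not complete.
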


\begin{corollary}
Neither $[\kt,\pskf]^{\textit{lcom}}$ nor $[\kt,\diff]^{\textit{lcom}}$ have the fmp.
\end{corollary}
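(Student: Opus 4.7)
The plan is to construct a bimodal formula $\varphi_\infty$, in a small number of propositional variables, that is satisfiable on $\auf\omega+1,>\zu\mprod\auf\omega,\ne\zu$ but cannot be satisfied in any finite frame for $L$. Since the given product frame is an $L$-frame by hypothesis, this yields an $L$-consistent formula with no finite model, so $L$ lacks the fmp.

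I would design $\varphi_\infty$ in three layers, in the spirit of \eqref{fasc}--\eqref{fdesc} but adapted to the present weaker interaction. A \emph{growth} layer of the form $\Dh p\land\Bh(p\to\Dh p)$ forces an $R_0$-chain of $p$-points starting from the evaluation point. A \emph{tagging} layer, using a second variable $q$ and suitable $\Dv/\Bv$-combinations, attaches to each $p$-point a pattern of $q$-values among its $R_1$-successors that varies systematically along the $R_0$-chain. A \emph{rigidity} layer then forbids two different $p$-points from carrying the same tag; this is where one invokes the only available interaction axiom, in its dual form $\Dh\Dv\psi\to\Dv\Dh\psi$. For satisfaction on the product, I would place $p$ at the points $(n,0)$ (reached from the root $(\omega,0)$ by $R_0$-descent) and let $q$ encode the row index $n$ via a pattern among the points $(n,k)$ with $k\ne 0$, exploiting the symmetry and pseudo-transitivity of $\ne$. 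Verification of the growth and tagging conjuncts reduces to a case analysis on the first coordinate, and left commutativity holds trivially on any product frame.

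The main work is the finite-frame argument. Assuming $\M,w_0\models\varphi_\infty$ for some finite $\M$ based on a frame $\F$ for $L$, I would iteratively extract $p$-witnesses $w_0,w_1,w_2,\ldots$ along $R_0$ via the growth clause, and argue pairwise distinctness via the tagging and rigidity clauses. The main obstacle is carrying out distinctness with only left commutativity available: unlike in the proofs of Theorems~\ref{t:wcxk} and~\ref{t:linxsfive}, an $R_1\!\circ\! R_0$-witness cannot be converted into an $R_0\!\circ\! R_1$-witness, so the $R_1$-tags of $w_i$ and $w_j$ cannot be compared head-on. I expect the rigidity clause to be shaped so that the comparison of $w_i$ with a later $w_j$ is made via an $R_0\!\circ\! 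R_1$-path from $w_i$ down to (a witness near) $w_j$, using weak connectedness of $R_0$ to place the $w_i$ on a single $R_0$-line and pseudo-transitivity of $R_1$ to collapse longer $R_1$-detours into manageable ones. Designing the tagging and rigidity layers so that these ingredients suffice within a fixed-size formula is the real difficulty, and it is precisely what explains why the hypothesis $\pskf$ (and not bare $\kkk$) is needed on the second component once the commutator is weakened to $[\cdot,\cdot]^{\textit{lcom}}$.
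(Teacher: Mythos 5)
Your overall shape is right and matches the paper's: the corollary follows from a single fixed formula ($\finftwo$ in the paper, your $\varphi_\infty$) that is satisfiable in $\auf\omega+1,>\zu\mprod\auf\omega,\ne\zu$ (a frame for both logics, since $\ne$ is pseudo-transitive and products validate $\Bv\Bh p\to\Bh\Bv p$) but forces infinity in any frame with a weakly connected $\Rh$, a pseudo-transitive $\Rv$, and \lcom. However, the proposal stops exactly where the content is, and the two layers you do specify would not work as described. First, the growth layer $\Dh p\land\Bh(p\to\Dh p)$ is \emph{unsatisfiable} at the root of $\auf\omega+1,>\zu\mprod\auf\omega,\ne\zu$: since $>$ is transitive, it demands a nonempty set of $p$-rows in $\omega$ with no least element. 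The paper's $\finftwo$ instead anchors the construction at an endpoint (clause \eqref{init2}) and grows the chain \emph{towards} the root by a zigzag of alternating $\Rv$- and $\Rh$-steps (clauses \eqref{hgen2}--\eqref{vgen2}); this zigzag is also precisely what lets \lcom\ alone suffice, since each application converts $y_n\Rh v_n\Rv u_{n+1}$ into $y_n\Rv y_{n+1}\Rh u_{n+1}$.

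Second, the tagging/rigidity idea --- give each $p$-point a distinct pattern of $q$-values among its $\Rv$-successors and forbid repeats --- cannot be expressed by a formula of fixed size with finitely many variables, since there are infinitely many tags to keep apart; you flag this as ``the real difficulty'' and leave it unresolved, but it is the whole proof. The paper's solution is to put distinctness entirely in the meta-argument rather than in $\varphi_\infty$: one defines an external family of formulas $\chi_0=\Bv\neg\next$ and $\chi_n=\Dv\bigl(\next\land\Dh(\diag\land\chi_{n-1})\bigr)$ of growing modal depth, which measure the length of the zigzag back to the endpoint, and proves $\M,u_n\models\chi_n\land\bigwedge_{i<n}\neg\chi_i$ by induction. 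The hypotheses you correctly identify as essential enter exactly here: pseudo-transitivity of $\Rv$ forces the $\next$-witness of $u_{n+1}$ to be unique (it must be $v_n$), and weak connectedness of $\Rh$ together with the extra clause \eqref{puniq} forces the $\diag$-witness of $v_n$ to be unique (it must be $u_n$), so the zigzag count cannot be shortened and the $\neg\chi_i$ conjuncts propagate. Without an explicit formula and this (or some equivalent) distinctness mechanism, the argument is incomplete.
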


%**************************************************************************************************************

\section{Proofs}\label{proofs}

\begin{proof}[Proof of Theorem~\ref{t:wcxk}.]
For every bimodal formula $\varphi$ and every $n<\omega$, we let
\[
\Dhn\varphi\;=\;\Dh^n\varphi\land\Bh^{n+1}\neg\varphi\;=\;\overbrace{\Dh\dots\Dh}^n\varphi\land
\overbrace{\Bh\dots\Bh}^{n+1}\neg\varphi.
\]
We will use a `refinement' of the formula \eqref{fdesc}.
Let $\finfone$ be the conjunction of the following formulas:
\begin{align}
\label{init}
& \Dv\Dh (\diag\land\Bh\bot),\\
\label{hgen}
& \Bv (\Dh\diag\to\Dh\Dhe\diag),\\
\label{vgen}
& \Bh\bigl(\Dv\Dhe\diag\to\Dv(\diag\land\Bh\neg\diag\land\Bh\Bh\neg\diag)\bigr).
\end{align}

\begin{lemma}\label{l:infone}
Let $\F=\auf W,\Rh,\Rv\zu$ be any 2-frame such that $\Rh$ is weakly connected, and 
$\Rh$, $\Rv$ are confluent and commute. If $\finfone$ is satisfiable in $\F$, then $\F$ is infinite.
\end{lemma}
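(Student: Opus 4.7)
The plan is to fix $w_0 \in W$ with $w_0 \models \finfone$ and, by induction on $n$, build a sequence $(t_n, u_n)_{n \geq 0}$ in $W$ maintaining the invariants $w_0 \Rv t_n$, $t_n \Rh u_n$, $u_n \models \diag$, and the $u_n$'s pairwise distinct; the mere existence of such a sequence then forces $W$ to be infinite. The base case is handed over by \eqref{init}, which produces $v_0, u_0 \in W$ with $w_0 \Rv v_0 \Rh u_0$ and $u_0 \models \diag \land \Bh \bot$; set $t_0 := v_0$.

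The inductive step $(t_n, u_n) \mapsto (t_{n+1}, u_{n+1})$ threads the three interaction properties through four moves. First, \eqref{hgen} at $t_n$ (legitimate because $w_0 \Rv t_n$ and $t_n \models \Dh \diag$ via $u_n$) produces $v_n$ with $t_n \Rh v_n$ and $v_n \models \Dhe \diag$; in particular $v_n \models \Bh \Bh \neg \diag$. Second, \rcom applied to $w_0 \Rv t_n \Rh v_n$ yields $v_n^\ast$ with $w_0 \Rh v_n^\ast \Rv v_n$, whence $v_n^\ast \models \Dv \Dhe \diag$. Third, \eqref{vgen} at $w_0$ fires at this $\Rh$-successor $v_n^\ast$, producing $z_n$ with $v_n^\ast \Rv z_n$ and $z_n \models \diag \land \Bh \neg \diag \land \Bh \Bh \neg \diag$. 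Fourth, \lcom applied to $w_0 \Rh v_n^\ast \Rv z_n$ yields $t_{n+1}$ with $w_0 \Rv t_{n+1} \Rh z_n$, and we set $u_{n+1} := z_n$. Weak connectedness of $\Rh$ then keeps the geometry tidy: at $t_n$ the two $\Rh$-successors $u_n$ and $v_n$ must be $\Rh$-comparable, and the options $u_n = v_n$ and $u_n \Rh v_n$ are excluded by the inductively maintained modal information on $u_n$ ($u_0 \models \Bh \bot$ in the base, and the $\Bh \neg \diag$-envelope around $u_n$ later) against $v_n \models \Dh \diag$, forcing $v_n \Rh u_n$; an analogous comparison at $v_n$, using $v_n \models \Bh \Bh \neg \diag$, identifies the $\diag$-witness of $\Dhe \diag$ at $v_n$ with $u_n$ itself.

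The main obstacle is ensuring $u_{n+1} \neq u_k$ for every $k \leq n$. The plan is to leverage the strong negative envelope $u_{n+1} \models \Bh \neg \diag \land \Bh \Bh \neg \diag$ delivered by \eqref{vgen} against each earlier $u_k$'s position in the accumulated $\Rh$-skeleton (in particular the facts $v_k \Rh u_k$ and $v_k \models \Bh \Bh \neg \diag$ recorded in stage $k$). A putative collision $u_{n+1} = u_k$, combined with weak connectedness at a suitably chosen shared $\Rh$-predecessor (typically $t_{n+1}$, or the new $v_{n+1}$ obtained by one more application of \eqref{hgen} at $t_{n+1}$), should exhibit a second $\diag$-point at $\Rh$-distance at most $2$ from $u_{n+1}$, contradicting $u_{n+1} \models \Bh \Bh \neg \diag$. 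Making this collision analysis uniform in $k$ — the case $k = n$ being essentially immediate, but smaller $k$ requiring the full skeleton bookkeeping — is the central technical hurdle, and I expect to pay for it by enriching the inductive invariant with the data $v_k \Rh u_k$ and $v_k \models \Bh \Bh \neg \diag$ for every earlier $k$.
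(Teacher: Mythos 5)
Your construction of the points $t_n,v_n,u_n$ (and the auxiliary $v_n^\ast$, which plays the role of the paper's $x_n$) matches the paper's construction step for step, and the use of weak connectedness to force $v_n\Rh u_n$ is fine. The gap is in the distinctness argument, which is the whole point of the lemma, and your proposed mechanism for it does not work. Every fact you record in your ``skeleton'' --- $u_k\models\diag\land\Bh\neg\diag\land\Bh\Bh\neg\diag$, $v_k\models\Dh\diag\land\Bh\Bh\neg\diag$, $v_k\Rh u_k$, $t_k\Rh v_k$, $w_0\Rv t_k$ --- is local and independent of $k$: the invariants for stage $n+1$ are literally the same formulas as for stage $k$. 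Hence a collision $u_{n+1}=u_k$ is perfectly consistent with the entire accumulated skeleton (a single point satisfying $\diag\land\Bh\neg\diag\land\Bh\Bh\neg\diag$ may have arbitrarily many $\Rh$-predecessors of the $v$-type and arbitrarily many $\Rv$-predecessors of the $x$-type), and no application of weak connectedness produces a ``second $\diag$-point at $\Rh$-distance at most $2$'' from $u_{n+1}$: the various $u_j$ are never forced to be $\Rh$-related to one another, and the relevant predecessors $v_{n+1}$ and $v_k$ of the merged point need not share an $\Rh$-predecessor, so \wcon\ never bites. No purely local, $n$-independent invariant can separate the $u_n$'s.

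What is needed --- and what the paper does --- is an invariant that genuinely varies with $n$. The paper proves by induction that $\M,u_n\models\Dh^{n}\top\land\Bh^{n+1}\bot$, i.e.\ that $u_n$ has $\Rh$-depth exactly $n$; since these conditions are mutually exclusive for distinct $n$, the $u_n$ are pairwise distinct. The base case is your \eqref{endpoint}-style fact $u_0\models\Bh\bot$. The inductive step has two parts: first, weak connectedness together with $v_n\models\Bh\Bh\neg\diag$ and $u_n\models\diag$ shows that \emph{every} $\Rh$-successor of $v_n$ satisfies $\Bh^{n+1}\bot$ (any such $u$ is either above, below, or successor-equivalent to $u_n$, and $u\Rh u_n$ is ruled out), whence $v_n\models\Dh^{n+1}\top\land\Bh^{n+2}\bot$; second, this exact depth is transported from $v_n$ to $u_{n+1}$ through their common $\Rv$-predecessor $x_n=v_n^\ast$, using \rcom\ and \conf\ repeatedly to lift and project $\Rh$-paths along $\Rv$-edges. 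This transfer step is where confluence is really used, and it is entirely absent from your proposal; without it (or some equivalent strictly monotone measure) the proof does not go through.
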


\begin{proof}
We will only use the following consequence of weak connectedness:

\medskip
\noindent
\wcon\qquad
\parbox[t]{12cm}
{$\forall x,y,z\ \Bigl(x\Rh y\;\land\; x\Rh z\; \to\; \bigl(y\Rh z\;\lor\; z\Rh y\;\lor\;\forall w\, (y\Rh w\;\leftrightarrow\; z\Rh w)\bigr)\Bigr).$}

\medskip
\noindent
Suppose that $\M,r\models\finfone$ for some model $\M$ based on $\F$. First,
we define inductively three sequences $u_n$, $v_n$, $x_n$, for $n<\omega$,
of points in $\F$ such that, for every $n<\omega$,
\begin{itemize}
\item[(a)]
$v_n\Rh u_n$,

\item[(b)]
$r\Rh x_n \Rv v_n$, and if $n>0$ then $x_{n-1}\Rv u_n$,

\item[(c)]
$\M,u_n\models\diag\land\Bh\neg\diag\land\Bh\Bh\neg\diag$,

\item[(d)]
$\M,v_n\models\Dhe\diag$.
\end{itemize}
If $n=0$, then by \eqref{init} there are $y_0$, $u_0$ such that
$r\Rv y_0\Rh u_0$ and
\begin{equation}\label{endpoint} 
\M,u_0\models \diag\land \Bh\bot,
\end{equation}
and so (c) holds.
By \eqref{hgen}, there is $v_0$ such that $y_0\Rh v_0$ and $\M,v_0\models\Dhe\diag$, and so
$v_0\Rh u_0$ follows by \wcon\ and \eqref{endpoint}. By \rcom, we have $x_0$ with $r\Rh x_0\Rv v_0$.

Now suppose that, for some $n <\omega$, $u_i$, $v_i$, $x_i$ with (a)--(d)
have already been defined for all $i\leq n$. 
By (b) and (d) of the IH, $r\Rh x_n$ and $\M,x_n\models\Dv\Dhe\diag$. So by \eqref{vgen}, there is $u_{n+1}$ such 
that $x_n\Rv u_{n+1}$ and 
\begin{equation}\label{nodiag}
\M,u_{n+1}\models\diag\land\Bh\neg\diag\land\Bh\Bh\neg\diag.
\end{equation}
By \lcom, there is $y_{n+1}$ with $r\Rv y_{n+1}\Rh u_{n+1}$. By \eqref{hgen}, 
there is $v_{n+1}$ such that $y_{n+1}\Rh v_{n+1}$ and $\M,v_{n+1}\models\Dhe\diag$, and so
$v_{n+1}\Rh u_{n+1}$ follows by \wcon\ and \eqref{nodiag}.
By \rcom, we have $x_{n+1}$ with $r\Rh x_{n+1}\Rv v_{n+1}$.

Next, we show that all the $u_n$ are different, and so $\F$ is infinite.
We show by induction on $n$ that, for all $n <\omega$,
\begin{equation}\label{diff}
\M,u_n\models\Dhn\top.
\end{equation}
For $n=0$,  \eqref{diff} holds by \eqref{endpoint}. Suppose inductively that \eqref{diff} holds for some
$n<\omega$.  We have $v_n\Rh u_n$, by (a) above. We claim that
\begin{equation}\label{vgood}
\forall u\ (v_n\Rh u\ \to\ \M,u\models\Bh^{n+1}\bot).
\end{equation}
Indeed, suppose that $v_n\Rh u$.
By \wcon, we have either $u\Rh u_n$, or $u_n\Rh u$, or  
$\forall w\, (u_n\Rh w\;\leftrightarrow\; u\Rh w)$. 
As $\M,u_n\models\diag$ by (c), and $\M,v_n\models\Bh\Bh\neg\diag$ by (d),
we cannot have $u\Rh u_n$.
As we have $\M,u_n\models\Bh^{n+1}\bot$ by the IH,
in the other two cases $\M,u\models\Bh^{n+1}\bot$ follows, proving \eqref{vgood}. 
As $\M,u_n\models\Dh^n\top$ by the IH, we obtain
\begin{equation}\label{vok}
 \M,v_n\models\Dhnn\top
 \end{equation}
 by \eqref{vgood} and (a).
By (b), we have $r\Rh x_n\Rv v_n$ and $x_n\Rv u_{n+1}$. 
So $\M,x_{n}\models\Dh^{n+1}\top$ follows by \rcom\ and \eqref{vok}.
Also, by \conf\ and \eqref{vok}, we have $\M,x_n\models\Bh^{n+2}\bot$.
Now we have $\M,u_{n+1}\models\Dh^{n+1}\top$ by \conf, and 
$\M,u_{n+1}\models\Bh^{n+2}\bot$ by \rcom.
Therefore, $\M,u_{n+1}\models\Dhnn\top$, as required.
\end{proof}

\begin{lemma}\label{l:satone}
Let $\F$ be a countably infinite one-step rooted frame. 
Then $\finfone$ is satisfiable in $\auf\omega+1,>\zu\mprod\F$.
\end{lemma}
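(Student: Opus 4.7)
The plan is to exhibit a valuation on $\auf\omega+1,>\zu\mprod\F$ that satisfies $\finfone$ at the point $(\omega,r)$, where $r$ denotes the root of $\F=(V,R)$. Since $V$ is countably infinite, fix an enumeration $V\setminus\{r\}=\{v_n:n<\omega\}$ and place $\diag$ on the ``diagonal''
$$\vartheta(\diag)=\{(n,v_n):n<\omega\}.$$
The idea is to mirror the construction of Lemma~\ref{l:infone}: its witnesses $u_n$ should live at horizontal level $n$, so assigning a fresh second-coordinate column $v_n$ to carry $\diag$ at exactly that level makes each $(n,v_n)$ an isolated $\diag$-point within its column.

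Verification proceeds conjunct by conjunct. For \eqref{init}, the path $(\omega,r)\Rv(\omega,v_0)\Rh(0,v_0)$ is a witness, since $(0,v_0)\models\diag$ and $0$ has no $>$-successors. For \eqref{hgen}, any nontrivial $\Rv$-successor of $(\omega,r)$ has the form $(\omega,v_n)$, at which $\Dh\diag$ is witnessed by $(n,v_n)$; the point $(n+1,v_n)$ then satisfies $\Dhe\diag$, because $(n,v_n)$ is the unique $\diag$-point strictly below level $n+1$ in column $v_n$. If $rRr$ happens to hold, the reflexive loop $(\omega,r)\Rv(\omega,r)$ satisfies the implication vacuously, since $\diag$ never occurs on the $r$-column. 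For \eqref{vgen}, consider an $\Rh$-successor $(n,r)$ with $n<\omega$; if $(n,r)\models\Dv\Dhe\diag$, the $\Bh\Bh\neg\diag$ clause of $\Dhe\diag$ pins the witness column to $v_{n-1}$ (and forces $n\ge 1$), after which $(n,v_n)$ serves as the required $\Rv$-witness for $\Dv(\diag\land\Bh\neg\diag\land\Bh\Bh\neg\diag)$, because the only $\diag$-point in column $v_n$ lies at level $n$ itself.

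I do not expect any real obstacle. The whole argument is a short arithmetic check in $\omega+1$ combined with the root property of $\F$; the only design choice is the diagonal valuation, and the only minor subtlety is the optional reflexive loop at $r$, handled by the vacuous case above. The essential feature of the diagonal placement is that no column contains more than one $\diag$-point, which is exactly what is needed to make the stacked $\Bh$-clauses in \eqref{hgen} and \eqref{vgen} automatically true.
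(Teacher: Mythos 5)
Your construction is exactly the paper's: the same diagonal valuation $\diag$ true precisely at $(n,y_n)$ for an enumeration of the non-root points, verified at $(\omega,r)$ (the paper leaves the conjunct-by-conjunct check, which you carry out correctly, as ``straightforward''). No gaps; the handling of a possible reflexive loop at $r$ and the pinning of the $\Dhe\diag$-witness column are the right points to note.
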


\begin{proof}
Suppose $\F=\auf W,R\zu$, and let $r,y_0,y_1,\dots$ be an arbitrary enumeration of $W$. Define a model $\M$ over 
$\auf\omega+1,>\zu\mprod\F$ by taking
\[
\M,\auf n,y\zu\models\diag  \qquad\mbox{iff}\qquad
n<\omega,\ y=y_n.
\]
Then it is straightforward to check that $\M,\auf \omega,r\zu\models\finfone$.
\end{proof}

Now Theorem~\ref{t:wcxk} follows from Lemmas~\ref{l:infone} and \ref{l:satone}.
\end{proof}

%*********

\medskip
\begin{proof}[Proof of Theorem~\ref{t:linxsfive}.]
We will use a variant of the formula $\finfone$ used in the previous proof. The problem is that in reflexive and/or dense 
frames, a formula of the form $\Dhe p$ is clearly not satisfiable.
In order to fix this, we use a version of the `tick trick', introduced in \cite{Spaan93,gkwz05a}.
We fix a propositional variable $t$, and define a new modal operator by setting, for every formula $\psi$,
\begin{align*}
& \DhM\psi=\bigl [t\to\Dh\bigl(\neg t\land (\psi\lor\Dh\psi)\bigr)\bigr]\land
\bigl [\neg t\to\Dh\bigl(t\land (\psi\lor\Dh\psi)\bigr)\bigr],\mbox{ and }\\
& \BhM\phi=\neg\DhM\neg\psi.
\end{align*}
Now let $\M$ be a model based on some 2-frame $\F=\auf W,\Rh,\Rv\zu$. We define a new binary relation
$\RhM$ on $W$ by taking, for all $x,y\in W$,
\[
x\RhM y\quad\mbox{iff}\quad
\exists z\in W\ \bigl(x\Rh z\mbox{ and }(\M,x\models t\ \leftrightarrow\ \M,z\models\neg t)\mbox{ and }
(z=y\mbox{ or }z\Rh y)\bigr).
\]
We will write $x\neg\RhM y$, whenever $x\RhM y$ does not hold.
It is straightforward to check the following:

\begin{claim}\label{c:trans}
If $\Rh$ is transitive, then $\RhM$ is transitive as well,
$\RhM\subseteq \Rh$,
$\Rh\circ\RhM\subseteq \RhM$, and
$\RhM\circ \Rh\subseteq \RhM$.
\end{claim}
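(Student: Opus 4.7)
The plan is to work directly from the definition: $x\RhM y$ holds iff there is a witness $z$ with $x\Rh z$, the truth values $\M,x\models t$ and $\M,z\models t$ opposite, and either $z=y$ or $z\Rh y$. Each of the four items then falls to a short case split on the alternatives $z=y$/$z\Rh y$, gluing chains together with transitivity of $\Rh$. The inclusion $\RhM\subseteq\Rh$ is immediate: the case $z=y$ is trivial, the case $z\Rh y$ follows from one application of transitivity. Similarly, $\RhM\circ\Rh\subseteq\RhM$ is handled by the same witness: given $x\RhM y\Rh w$ with witness $z$, we reuse $z$ for $x\RhM w$, because the tail $z\Rh w$ follows from $z=y$ (or $z\Rh y$) together with $y\Rh w$ via transitivity, and the two $t$-conditions are unchanged.

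The remaining two items require switching the witness depending on how the $t$-values line up. For $\Rh\circ\RhM\subseteq\RhM$, suppose $x\Rh y\RhM w$ with $z$ witnessing the second step (so $y\Rh z$, the $t$-values of $y$ and $z$ disagree, and $z=w$ or $z\Rh w$). If $t$ at $x$ and $z$ disagree, then $z$ still works, since $x\Rh z$ follows by transitivity. If they agree, then $t$ at $y$ disagrees with $t$ at $x$, so $y$ itself is a legitimate new witness: $x\Rh y$ with flipped $t$-values, and the required tail $y\Rh w$ comes from $y\Rh z$ combined with $z=w$ or $z\Rh w$ and transitivity.

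For transitivity of $\RhM$, let $z_1$ witness $x\RhM y$ and $z_2$ witness $y\RhM w$. Transitivity of $\Rh$ first gives $x\Rh z_2$, since either $z_1=y$ or $z_1\Rh y$, and $y\Rh z_2$. If $t$ at $x$ and $z_2$ disagree, then $z_2$ witnesses $x\RhM w$. Otherwise $t$ at $z_1$ and $z_2$ must disagree, and since $z_1\Rh z_2$ via $y$, transitivity propagates the tail from $z_2$ to $z_1$, giving $z_1\Rh w$; hence $z_1$ serves as the witness for $x\RhM w$. The only real subtlety across all four parts is this witness-switching by matching $t$-values; once that is observed, every step reduces to unfolding the definition and one or two applications of transitivity of $\Rh$.
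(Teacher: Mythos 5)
Your proof is correct and is exactly the routine verification the paper omits (it states the claim with ``It is straightforward to check'' and gives no proof): unfold the definition of $\RhM$, case-split on whether the witness $z$ equals the target or $\Rh$-precedes it, and apply transitivity of $\Rh$; the only point of care is choosing the witness with the right $t$-value, which you handle. One small simplification: in the transitivity argument the witness $z_1$ for $x\RhM y$ already works unconditionally (since $x\Rh z_1$, the $t$-values of $x$ and $z_1$ differ, and $z_1\Rh w$ follows by transitivity), so no case split on the $t$-value of $z_2$ is needed there.
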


 Also, $\DhM$ behaves like
a modal diamond w.r.t. $\RhM$, that is, for all $x\in W$,
\[
\M,x\models\DhM\psi\quad\mbox{iff}\quad
\exists y\in W\ \bigl(x\RhM y\mbox{ and }\M,y\models\psi\bigr).
\]
However, $\RhM$ is not necessarily weakly connected whenever $\Rh$ is weakly connected, but if $\Rh$ is also transitive, then it does have 
%the slightly weaker

\medskip
\noindent
\wconM\quad
\parbox[t]{12cm}
{$\forall x,y,z\ \Bigl(x\RhM y\;\land\; x\RhM z\; \to\; \bigl(y\RhM z\;\lor\; z\RhM y\;\lor\;\forall w\, (y\RhM w\;\leftrightarrow\; z\RhM w)\bigr)\Bigr).$}

\begin{claim}\label{c:wcon}
If $\Rh$ is transitive and weakly connected, then \wconM\ holds in $\M$.
\end{claim}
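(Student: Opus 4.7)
The plan is to deduce \wconM\ from the weak connectedness of $\Rh$ applied to carefully chosen triples of points, together with a case split on the truth value of the tick variable $t$. I would begin by assuming $x\RhM y$ and $x\RhM z$; Claim~\ref{c:trans} then gives $x\Rh y$ and $x\Rh z$, and weak connectedness of $\Rh$ yields $y=z$, $y\Rh z$, or $z\Rh y$. The case $y=z$ trivially produces the third disjunct of \wconM, and by the symmetry between $y$ and $z$ it suffices to treat the case $y\Rh z$.

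Next I would split on whether $t$ takes the same truth value at $y$ and at $z$. If the values differ, then $z$ itself witnesses $y\RhM z$ directly (using $y\Rh z$ and $z=z$ in the definition of $\RhM$), so the first disjunct of \wconM\ is established. The work lies in the remaining case, where $\M,y\models t$ iff $\M,z\models t$. Here I claim that either $y\RhM z$ or the two $\RhM$-successor sets $\{w:y\RhM w\}$ and $\{w:z\RhM w\}$ coincide; the inclusion from $z$ to $y$ is immediate since $y\Rh z\RhM w$ gives $y\RhM w$ by $\Rh\circ\RhM\subseteq\RhM$ of Claim~\ref{c:trans}.

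The reverse inclusion is the heart of the argument. Assume $y\RhM z$ fails and let $u$ witness $y\RhM w$, so $y\Rh u$, the $t$-values at $y$ and $u$ disagree, and $u=w$ or $u\Rh w$. Were $u=z$ or $u\Rh z$, the same $u$ would witness $y\RhM z$, contradicting the assumption. Applying weak connectedness of $\Rh$ to the successors $u$ and $z$ of $y$ therefore forces $z\Rh u$. Since the $t$-values agree at $y$ and $z$ but disagree between $y$ and $u$, they also disagree between $z$ and $u$, and so $u$ now witnesses $z\RhM w$. The subtle point is that the failure of $y\RhM z$ is precisely what rules out the first two options in weak connectedness applied to $u$ and $z$, which is what allows every $\RhM$-witness emerging at $y$ to be reused at $z$.
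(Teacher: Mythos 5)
Your proof is correct and follows essentially the same route as the paper's: reduce via Claim~\ref{c:trans} and weak connectedness of $\Rh$ to the case $y\Rh z$, use $\Rh\circ\RhM\subseteq\RhM$ for one inclusion, and apply weak connectedness to the witness $u$ and $z$ (ruling out $u=z$ and $u\Rh z$ because $y\RhM z$ fails) to transfer the witness to $z$. The only cosmetic difference is that you case-split on whether the $t$-values at $y$ and $z$ agree, whereas the paper assumes $y\RhM z$ fails and derives that agreement; the two arguments are logically identical.
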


\begin{proof}
Suppose that $x\RhM y$ and $x\RhM z$. By Claim~\ref{c:trans} and weak connectedness of $\Rh$, we have
that either $y=z$, or $y\Rh z$, or $z\Rh y$.
If $y=z$ then $\forall w\, (y\RhM w\;\leftrightarrow\; z\RhM w)$ clearly holds.
Next, suppose $y\Rh z$ and $y\neg\RhM z$. We claim that $\forall w\, (y\RhM w\;\leftrightarrow\; z\RhM w)$ follows.
Indeed, suppose first that $z\RhM w$ for some $w$. Then we have $y\RhM w$ by Claim~\ref{c:trans}.
Now suppose $y\RhM w$ for some $w$, and $\M,y\models t$. (The case when $\M,y\models\neg t$ is similar.) 
As $y\Rh z$ and $y\neg\RhM z$, we also have $\M,z\models t$. Further, there is $u$ such that $\M,u\models\neg t$,
$y\Rh u$ and either $u=w$ or $u\Rh w$. As $\Rh$ is weakly connected, either $u=z$, or $u\Rh z$, or $z\Rh u$.
As $y\Rh z$ and $y\neg\RhM z$, we cannot have $u=z$ or $u\Rh z$, and so $z\Rh u$ follows, implying $z\RhM w$ as required. The case when $z\Rh y$ and $z\neg\RhM y$ is similar.
\end{proof}

In case $\Rh$ and $\Rv$ interact in certain ways, we would like to force similar interactions between 
$\RhM$ and $\Rv$. To this end, suppose that $\M,r\models\eqref{tick}$, where
\begin{equation}\label{tick}
(t\lor\Dv t\to t\land\Bv t)\land \Bh(t\lor\Dv t\to t\land\Bv t),
\end{equation}
and consider the following properties:

\medskip
\noindent
\lcomM\quad
$\forall y,z\ (r\RhM y\Rv z\;\to\;\exists u\ r\Rv u\RhM z),$

\medskip
\noindent
\rcomM\quad
$\forall x,y,z\ \bigl((x=r\;\lor\;r\Rh x)\;\land\;x\Rv y\RhM z\;\to\;\exists u\ x\RhM u\Rv z\bigr),$

\medskip
\noindent
\confM\quad\,
$\forall x,y,z\ \bigl(r\Rh x\RhM z\;\land\;x\Rv y\;\to\;\exists u\; (y\RhM u\;\land\; z\Rv u)\bigr).$

\begin{claim}\label{c:comm}
Suppose that $\Rh$ is transitive and $\M,r\models\eqref{tick}$.
\begin{itemize}
\item[{\rm (i)}]
If \lcom\ holds in $\F$, then \lcomM\ holds in $\M$.
\item[{\rm (ii)}]
If \rcom\ holds in $\F$, then \rcomM\ holds in $\M$.
\item[{\rm (iii)}]
If \conf\ holds in $\F$, then \confM\ holds in $\M$.
\end{itemize}
\end{claim}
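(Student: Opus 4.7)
The plan is, in each of the three parts, to unfold the assumed $\RhM$-link into an ordinary $\Rh$-step across a "bridge" point $z'$ where the $t$-value flips, apply the corresponding frame property \lcom/\rcom/\conf\ on $\F$ (once or twice) to build a parallel bridge, and then invoke \eqref{tick} to certify the $t$-side of the new $\RhM$-witness condition. The one observation I would use repeatedly is that \eqref{tick} makes $t$ constant along $\Rv$-successors of every "anchored" point, i.e., every $w$ with $w=r$ or $r\Rh w$; transitivity of $\Rh$ keeps all auxiliary points produced along the way anchored.

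For (i), I would unfold $r\RhM y$ via some $z'$ satisfying $r\Rh z'$, $\M,r\models t\leftrightarrow\M,z'\models\neg t$, and $z'=y$ or $z'\Rh y$. By transitivity, $r\Rh y$, so \lcom\ on $r\Rh y\Rv z$ produces $u$ with $r\Rv u\Rh z$. In the subcase $z'=y$ I would take $z''=z$ and verify $\M,u\models t\Leftrightarrow\M,z\models\neg t$ by the chain $\M,u\models t\Leftrightarrow\M,r\models t\Leftrightarrow\M,y\models\neg t\Leftrightarrow\M,z\models\neg t$, using tick at $r$ and tick at $y$. In the subcase $z'\Rh y$, where the $t$-values at $z'$ and $y$ need not be related, I would instead apply \lcom\ twice: first to $z'\Rh y\Rv z$ to obtain $u'$ with $z'\Rv u'\Rh z$, then to $r\Rh z'\Rv u'$ to obtain $u$ with $r\Rv u\Rh u'$. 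Taking $z''=u'$ (so $u\Rh u'\Rh z$ by transitivity), the $t$-equivalence $\M,u\models t\Leftrightarrow\M,u'\models\neg t$ comes from tick at $r$, the flip at $z'$, and tick at $z'$ (which is available because $r\Rh z'$).

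For (ii), unfold $y\RhM z$ via $z'$. If $z'=z$, a single application of \rcom\ to $x\Rv y\Rh z$ gives $u$ with $x\Rh u\Rv z$, and I would take $z''=u$. If $z'\Rh z$, I would apply \rcom\ to $x\Rv y\Rh z'$ to get $u_0$ with $x\Rh u_0\Rv z'$, then to $u_0\Rv z'\Rh z$ to get $u_1$ with $u_0\Rh u_1\Rv z$, and take $u=u_1$, $z''=u_0$. For (iii), unfold $x\RhM z$ via $z'$ and apply \conf\ to $x\Rv y$ and $x\Rh z'$ to produce $u_0$ with $y\Rh u_0$ and $z'\Rv u_0$; if $z'=z$ take $u=z''=u_0$, otherwise apply \conf\ again to $z'\Rv u_0$ and $z'\Rh z$ to produce $u_1$ with $u_0\Rh u_1$ and $z\Rv u_1$, and take $u=u_1$, $z''=u_0$. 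In both (ii) and (iii) the $t$-side is settled by a four-link equivalence: tick at $x$ transports $t$ across the $\Rv$-step on the old side, the original $\RhM$-witness flips it at $z'$, and tick at the auxiliary point ($u$, $u_0$, or $z'$) transports it to the new side.

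The main obstacle I anticipate is not the combinatorics of choosing the witnesses, which follows a uniform "unfold--commute/confluence--tick" recipe, but the bookkeeping that \eqref{tick} is applicable at every base point used: each tick invocation requires its base to be $r$ or an $\Rh$-successor of $r$, and it is precisely transitivity of $\Rh$ (together with the containment $\RhM\subseteq\Rh$ recorded in Claim~\ref{c:trans}) that propagates this anchor through the auxiliary points produced by \lcom, \rcom, and \conf. The subcase split on $z'=z$ versus $z'\Rh z$ therefore has to be settled before selecting the bridge $z''$, so that the correct anchor for the final tick application is visibly in place.
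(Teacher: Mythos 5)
Your proposal is correct and follows essentially the same route as the paper: unfold the $\RhM$-link into its bridge point, apply the relevant frame property once or twice depending on whether the bridge equals the target or strictly precedes it, and use \eqref{tick} at anchored points (with transitivity propagating the anchor) to verify the $t$-flip at the new bridge. Your treatment of (ii) coincides step for step with the paper's displayed argument, and your (i) and (iii) are the natural analogues the paper leaves to the reader.
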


\begin{proof}
We show (ii) (the proofs of the other two items are similar and left to the reader).
Suppose that $x=r$ or $r\Rh x$, $x\Rv y\RhM z$, and $\M,x\models t$.
Then by \eqref{tick}, we have $\M,y\models t$. As $y\RhM z$, there is $v$ such that 
$\M,v\models \neg t$, $y\Rh v$, and $v=z$ or $v\Rh z$. By \rcom, there is $w$ with $x\Rh w\Rv v$,
and so $\M,w\models \neg t$ by the transitivity of $\Rh$ and \eqref{tick}.
If $v=z$, then $x\RhM w\Rv z$, as required. If $v\Rh z$ then, again by \rcom, there is $u$ with
$w\Rh u\Rv z$. Therefore, $x\RhM u\Rv z$, as required. The case when $\M,x\models\neg t$ is similar.
\end{proof}

Let $\finfoner$ be the conjunction of \eqref{tick} and the formulas obtained from \eqref{init}--\eqref{vgen}
by replacing each $\Dh$ with $\DhM$, and each $\Bh$ with $\BhM$. Now, because of Claims~\ref{c:wcon} and \ref{c:comm}, the following lemma is proved analogously to Lemma~\ref{l:infone}, with replacing $\Rh$ by $\RhM$
everywhere in its proof:

\begin{lemma}\label{l:infoner}
Let $\F=\auf W,\Rh,\Rv\zu$ be any 2-frame such that $\Rh$ is transitive and weakly connected, and 
$\Rh$, $\Rv$ are confluent and commute. If $\finfoner$ is satisfiable in $\F$, then $\F$ is infinite.
\end{lemma}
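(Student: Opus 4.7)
The plan is to replay the proof of Lemma~\ref{l:infone} almost verbatim, with $\RhM$, $\DhM$, $\BhM$ in place of $\Rh$, $\Dh$, $\Bh$, and with the starred frame conditions replacing their unstarred counterparts. Assuming $\M,r\models\finfoner$, since \eqref{tick} holds at $r$ we may invoke Claim~\ref{c:comm} to get \lcomM, \rcomM, \confM\ and Claim~\ref{c:wcon} to get \wconM; together with the transitivity of $\RhM$ and the absorption properties $\RhM\subseteq\Rh$, $\Rh\circ\RhM\subseteq\RhM$, $\RhM\circ\Rh\subseteq\RhM$ from Claim~\ref{c:trans}, these take over the roles that transitivity, \wcon, \lcom, \rcom, \conf\ played in the original argument.

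First I would inductively build sequences $u_n,v_n,x_n$ fulfilling the $\RhM$-analogues of (a)--(d). In the base case, the starred \eqref{init} gives $r\Rv y_0\RhM u_0$ with $\M,u_0\models p\land\BhM\bot$; the starred \eqref{hgen} applied at the $\Rv$-successor $y_0$ produces $v_0$ with $y_0\RhM v_0$ and $\M,v_0\models\DhMe p$; \wconM\ then forces $v_0\RhM u_0$, because $u_0\RhM v_0$ would (by transitivity of $\RhM$) give a $\RhM\RhM$-chain from $u_0$ ending in a $p$-point, contradicting $\BhM\bot$, and the third disjunct of \wconM\ would equate the $\RhM$-successor sets of $u_0$ and $v_0$, contradicting $v_0\models\DhM p$ versus $u_0\models\BhM\bot$; finally \rcomM\ applied at $r$ yields $x_0$ with $r\RhM x_0\Rv v_0$. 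The inductive step uses the starred \eqref{vgen} at $x_n$ (valid since $r\RhM x_n$, hence $r\Rh x_n$ by Claim~\ref{c:trans}) to obtain $u_{n+1}$ with $x_n\Rv u_{n+1}$ and $\M,u_{n+1}\models p\land\BhM\neg p\land\BhM\BhM\neg p$; \lcomM\ from $r\RhM x_n\Rv u_{n+1}$ gives $y_{n+1}$; \eqref{hgen}$^\bullet$ gives $v_{n+1}$; \wconM\ applied to $y_{n+1}\RhM u_{n+1}$ and $y_{n+1}\RhM v_{n+1}$ forces $v_{n+1}\RhM u_{n+1}$ by the same case analysis as before, using $\BhM\BhM\neg p$ at $u_{n+1}$ to kill the $u_{n+1}\RhM v_{n+1}$ case and $\BhM\neg p$ to kill the third disjunct; and \rcomM\ yields $x_{n+1}$.

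Next I would mirror the distinctness argument, showing by induction that $\M,u_n\models\DhMn\top$. The analogue of \eqref{vgood} says $v_n\RhM u$ implies $\M,u\models\BhM^{n+1}\bot$, and is proved exactly as in Lemma~\ref{l:infone} using \wconM\ and the constraints from (c)$^\bullet$, (d)$^\bullet$. From this and $v_n\RhM u_n$ one obtains $\M,v_n\models\DhM^{n+1}\top\land\BhM^{n+2}\bot$, which then propagates to $x_n$ via \rcomM\ and \confM, and finally to $u_{n+1}$ by another round of \confM\ and \rcomM. The main obstacle I foresee is merely bookkeeping: every application of \lcomM, \rcomM, \confM\ must be justified at a point that is $r$ or an $\Rh$-successor of $r$, and every chain of $\RhM$-steps produced in the confluence arguments must itself be anchored to $r$ so that subsequent applications remain legal. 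This bookkeeping is dispatched uniformly by $\RhM\subseteq\Rh$ and $\Rh\circ\RhM\subseteq\RhM$ from Claim~\ref{c:trans}: since $r\RhM x_n$ implies $r\Rh x_n$, and any intermediate $w$ reached by composing $\Rh$ and $\RhM$ from $r$ still satisfies $r\RhM w$ (hence $r\Rh w$), the starred properties apply wherever they are needed. No other new ingredient is required.
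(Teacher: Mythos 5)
Your proposal is correct and is essentially the paper's own proof: the paper simply states that Lemma~\ref{l:infoner} is proved by replaying the proof of Lemma~\ref{l:infone} with $\RhM$ in place of $\Rh$, justified by Claims~\ref{c:trans}, \ref{c:wcon} and \ref{c:comm}. Your additional bookkeeping observation --- that every point at which \lcomM, \rcomM\ or \confM\ is invoked remains anchored to $r$ via $\RhM\subseteq\Rh$ and $\Rh\circ\RhM\subseteq\RhM$ --- is exactly the detail the paper leaves implicit, and you dispatch it correctly.
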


\begin{lemma}\label{l:satoner}
Let $\F_0$ be a frame for $\kft$ that contains an $\auf\omega+1,>\zu$-type chain, and let $\F_1$ be a countably
infinite one-step rooted frame. Then $\finfoner$ is satisfiable in $\F_0\mprod\F_1$.
\end{lemma}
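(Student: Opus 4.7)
The plan is to mimic Lemma~\ref{l:satone}, adapting the construction with a valuation for the tick variable $t$ that forces $\RhM$ to behave like a strict descending order on the chain $\{x_n:n\leq\omega\}$. Let $r$ be the root of $\F_1$, enumerate $W_1=\{r,y_0,y_1,\ldots\}$, and define a model $\M$ on $\F_0\mprod\F_1$ by
\[
\M,\langle x,y\rangle\models \diag\ \iff\ x=x_n \text{ and } y=y_n \text{ for some } n<\omega,
\]
\[
\M,\langle x,y\rangle\models t\ \iff\ \tau(x)=1,
\]
where $\tau\colon W_0\to\{0,1\}$ is chosen so that $\tau(x_n)=n\bmod 2$ for $n<\omega$, $\tau(x_\omega)=0$, and is extended to points outside the chain according to their position in the linear-modulo-cluster order on the down-set of $x_\omega$. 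I claim $\M,\langle x_\omega,r\rangle\models\finfoner$.

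Since $t$ depends only on the first coordinate, $t$ is constant on every column $\{x\}\times W_1$, so the axiom \eqref{tick} is satisfied everywhere in $\M$. Consequently $\RhM$ in the product acts column-wise on $W_0$, and on the chain the alternating pattern of $\tau$ makes $x_n\RhM x_m$ equivalent to $n>m$, with no $\RhM$-self-loops among the $x_n$ and no $\RhM$-successor at all for $x_0$. With this in place, the verification of the translated \eqref{init}, \eqref{hgen}, \eqref{vgen} at $\langle x_\omega,r\rangle$ runs in direct parallel with Lemma~\ref{l:satone}: $\langle x_0,y_0\rangle$ provides the $\BhM\bot$-witness for the translation of \eqref{init}; $\langle x_{n+1},y_n\rangle$ witnesses $\DhMe\diag$ on column $y_n$ for the translation of \eqref{hgen} (using that the chain has no index strictly between $n$ and $n+1$); and at each chain-type $\RhM$-successor $\langle x_n,r\rangle$ of the root where the antecedent $\Dv\DhMe\diag$ holds, the point $\langle x_n,y_n\rangle$ witnesses the consequent of the translation of \eqref{vgen}.

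The technically involved part is controlling $\RhM$-edges through points of $W_0$ outside the chain. By Claim~\ref{c:wcon} and transitivity of $R_0$, the down-set $R_0(x_\omega)\cup\{x_\omega\}$ is linearly ordered modulo clusters, so each outside point $z$ has a well-defined ``level'' $\ell_z$ with respect to the chain, and the choice of $\tau(z)$ is made in accordance with the parity at that level. A case analysis on $\ell_z$ shows that the ranges of $n$ for which $x_{n+1}\RhM z$ can hold and for which $z\RhM x_n$ can hold are disjoint, so $z$ cannot supply a length-$2$ $\RhM$-shortcut between the consecutive chain points $x_{n+1}$ and $x_n$, and hence cannot invalidate the $\BhM\BhM\neg\diag$ obligations above; similarly, $\langle z,r\rangle$ does not satisfy $\Dv\DhMe\diag$ in a way that forces the consequent of \eqref{vgen} on a $\diag$-free column, because the maximal chain index reachable from $z$ either fails to be $\RhM$-reachable at all or is unmasked by a chain shortcut of the right length. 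This bookkeeping is the main obstacle, and carrying it out carefully under the hypotheses on $\F_0$ completes the proof.
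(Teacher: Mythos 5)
Your overall strategy (the tick trick with $t$ alternating down the chain and $\diag$ tied to an enumeration of $W_1$) is the same as the paper's, but your valuation of $\diag$ is too thin, and the step where you dismiss the off-chain points is where the proof fails. You put $\diag$ only at the isolated points $\auf x_n,y_n\zu$. Now take any $w\in W_0$ with $x_{n+1}\Rh w\Rh x_n$ and $n\geq 1$; such $w$ exist in every dense frame, e.g.\ in $\auf\mathbb Q,<\zu$ or a general $\sft$-frame, which are exactly the cases this lemma is needed for. Since $\tau(x_{n+1})\neq\tau(x_n)$ in your scheme, $\tau(w)$ must agree with one of them, and in either case $\auf w,r\zu$ is an $\RhM$-successor of $\auf x_\omega,r\zu$ satisfying $\Dv\DhMe\diag$: if $\tau(w)\neq\tau(x_n)$ then $w\RhM x_n$ while $\RhM\circ\RhM$ from $w$ only reaches indices below $n$, so $\auf w,y_n\zu\models\DhMe\diag$; if $\tau(w)=\tau(x_n)$ then the first flip below $w$ is $x_{n-1}$, so $\auf w,y_{n-1}\zu\models\DhMe\diag$. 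Either way the antecedent of the translated \eqref{vgen} holds at $\auf w,r\zu$, but its consequent $\Dv(\diag\land\cdots)$ requires $\diag$ somewhere in the column $\{w\}\times W_1$, which is empty under your valuation; so the translated \eqref{vgen} is false at $\auf x_\omega,r\zu$. Your proposed escape --- that such a $w$ is always ``unmasked by a chain shortcut of the right length'' --- is simply false: when $w$ is the only point strictly between $x_{n+1}$ and $x_n$ there is no $u$ with $w\RhM u\RhM x_n$ (resp.\ $x_{n-1}$), so the $\BhM\BhM\neg\diag$ conjunct genuinely holds at the offending point.

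The paper's construction avoids this precisely by making both $t$ and $\diag$ constant on whole blocks rather than single points: it partitions the relevant part of $W_0$ into the intervals $[x_{n+1},x_n)$, lets $t$ alternate on these intervals, and lets $\diag$ hold on all of $[x_{n+1},x_n)\times\{y_n\}$. Then the only points $x$ for which $\RhM$ from $x$ meets the $\diag$-stripe of row $y_n$ while $\RhM\circ\RhM$ misses it are the points of $[x_{n+2},x_{n+1})$, and every such point itself carries $\diag$ in row $y_{n+1}$, so the consequent of the translated \eqref{vgen} is always available. If you replace your singleton valuations of $\diag$ and $t$ by these interval-based ones, your verification of \eqref{tick}, \eqref{init} and \eqref{hgen} goes through essentially unchanged and the gap closes; as written, however, the construction does not prove the lemma for any $\F_0$ with points strictly between consecutive chain elements.
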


\begin{proof}
Suppose $\F_i=\auf W_i,R_i\zu$ for $i=0,1$. Let $x_n$, for $n\leq\omega$, be distinct points in $W_0$ such that
for all $n,m\leq\omega$, $n\ne m$, we have $x_n \Rh x_m$ iff $n>m$. For every $n<\omega$, we let
\[
[x_{n+1},x_n)=\bigl(\{x\in W_0 : x_{n+1} \Rh x \Rh x_n\}\cup\{x_{n+1}\}\bigr)-
\{x : x=x_{n}\mbox{ or }x_n\Rh x\}.
\]
Let $r,y_0,y_1,\dots$ be an arbitrary enumeration of $W_1$. Define a model $\M$ over $\F_0\mprod\F_1$ by taking
\begin{align*}
\M,\auf x,y\zu\models t & \quad\mbox{iff}\quad
x\in [x_{n+1},x_n),\ n<\omega,\ \mbox{$n$ is odd},\ y\in W_1,\\
\M,\auf x,y\zu\models\diag & \quad\mbox{iff}\quad
x\in [x_{n+1},x_n),\ y=y_n,\ n<\omega.
\end{align*}
Then it is easy to check that $\M,\auf x_\omega,r\zu\models\finfoner$.
\end{proof}

Now Theorem~\ref{t:linxsfive} follows from Lemmas~\ref{l:infoner} and \ref{l:satoner}.
\end{proof}

%*********

\medskip
\begin{proof}[Proof of Theorem~\ref{t:linxdiff}.]
Let $\finftwo$ be the conjunction of the following formulas:
\begin{align}
\label{init2}
& \Dh(\diag\land\neg\next\land\Bh\neg\next\land\Bv\neg\next),\\
\label{hgen2}
& \Bv^+\Dh(\next\land\Bv\neg\next),\\
\label{vgen2}
& \Bv^+\Bh\bigl(\next\to\Dv(\diag\land\neg\next\land\Bh\neg\next\land\Dv\next)\bigr),\\
\label{puniq}
& \Bv^+\Bh\Bh(\diag\to\Bh\neg\diag),
\end{align}
where $\Bv^+\psi=\psi\land\Bv\psi$, for any formula $\psi$.

\begin{lemma}\label{l:inftwo}
Let $\F=\auf W,\Rh,\Rv\zu$ be any 2-frame such that $\Rh$ is weakly connected, $R_1$ is
pseudo-transitive, and $\Rh$, $\Rv$ left-commute. If $\finftwo$ is satisfiable in $\F$, then $\F$ is infinite.
\end{lemma}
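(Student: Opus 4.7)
The plan is to emulate Lemma~\ref{l:infone}: assuming $\M,r\models\finftwo$ for some model $\M$ based on $\F$, I will inductively build sequences $r_n,v_n,u_n$ ($n<\omega$) of points in $\F$ and then show that all $u_n$ are distinct, so $\F$ is infinite.

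For the base, \eqref{init2} yields $u_0$ with $r\Rh u_0$ and $\M,u_0\models\diag\land\neg\next\land\Bh\neg\next\land\Bv\neg\next$; \eqref{hgen2} at $r$ yields $v_0$ with $r\Rh v_0$ and $\M,v_0\models\next\land\Bv\neg\next$; and weak connectedness of $\Rh$ applied to $r\Rh u_0,r\Rh v_0$, together with the incompatible $\next$- and $\Bh\neg\next$-valuations at $u_0,v_0$, forces $v_0\Rh u_0$. Then \eqref{vgen2} applied to $v_0$ gives $u_1$ with $v_0\Rv u_1$ and $\M,u_1\models\diag\land\neg\next\land\Bh\neg\next\land\Dv\next$, and \lcom\ on $r\Rh v_0\Rv u_1$ yields $r_1$ with $r\Rv r_1\Rh u_1$. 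The inductive step repeats this at $r_n$ in place of $r$; pseudo-transitivity of $\Rv$ applied to $r\Rv r_n\Rv r_{n+1}$ keeps $r_{n+1}$ equal to $r$ or an $\Rv$-successor of $r$, so the $\Bv^+$-formulas \eqref{hgen2}, \eqref{vgen2}, \eqref{puniq} remain valid at every $r_n$.

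The distinctness of the $u_n$ rests on two uniqueness claims. (U1): $u_n$ is the unique $\Rh$-successor of $v_n$ satisfying $\diag$ -- weak connectedness of $\Rh$ at $v_n$ yields a trichotomy for any rival $\diag$-successor $u$, and \eqref{puniq} at $r_n$ (applied along $r_n\Rh v_n\Rh u_n$ and $r_n\Rh v_n\Rh u$) forces $\Bh\neg\diag$ at both $u_n$ and $u$, ruling out the two non-equal cases. (U2): for $n\geq1$, $v_{n-1}$ is the unique $\Rv$-successor of $u_n$ satisfying $\next$ -- for any such $w$, pseudo-transitivity of $\Rv$ on $v_{n-1}\Rv u_n\Rv w$ forces $v_{n-1}=w$ or $v_{n-1}\Rv w$, the latter impossible since $v_{n-1}\models\Bv\neg\next$; existence of such $w$ is guaranteed by $u_n\models\Dv\next$. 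Combining, $u_{n+1}=u_{m+1}$ forces $v_n=v_m$ by (U2) and hence $u_n=u_m$ by (U1); cascading this reduction, any collision $u_n=u_m$ with $n<m$ reduces to $u_0=u_{m-n}$ with $m-n\geq1$, contradicting $u_0\models\Bv\neg\next$ against $u_{m-n}\models\Dv\next$.

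I expect the main obstacle to be exactly this distinctness step. Lemma~\ref{l:infone} exploited transitivity of $\Rh$ to show $u_n\models\Dhn\top$ and separate the $u_n$ by modal depth, but here $\Rh$ is only weakly connected and that device is unavailable; it must be replaced by the uniqueness cascade above. The clauses of $\finftwo$ have to be calibrated accordingly: the `$\Bv\neg\next$' piece of \eqref{hgen2} is precisely what lets pseudo-transitivity yield (U2), and \eqref{puniq} is precisely what lets weak connectedness yield (U1).
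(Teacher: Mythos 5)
Your proposal is correct and follows the paper's proof almost step for step: the same zigzag construction of the points $y_n$ (your $r_n$), $v_n$, $u_n$ with the same use of \eqref{init2}--\eqref{vgen2}, \lcom{} and pseudo-transitivity, and the same two uniqueness facts --- your (U1) and (U2) are exactly the paper's \eqref{pok} and \eqref{qok}, derived in the same way from \eqref{puniq} plus weak connectedness, and from pseudo-transitivity plus $\Bv\neg\next$, respectively. The only divergence is in how distinctness of the $u_n$ is extracted: the paper packages the cascade into distinguishing formulas $\chi_0=\Bv\neg\next$, $\chi_n=\Dv\bigl(\next\land\Dh(\diag\land\chi_{n-1})\bigr)$ and proves $\M,u_n\models\chi_n\land\bigwedge_{i<n}\neg\chi_i$ by induction, whereas you run the uniqueness cascade directly on the points, reducing any collision $u_n=u_m$ ($n<m$) to $u_0=u_{m-n}$ and contradicting $u_0\models\Bv\neg\next$ against $u_{m-n}\models\Dv\next$. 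The two arguments are interchangeable and rest on identical combinatorial content; yours is marginally more elementary in that it needs no auxiliary formulas, while the paper's makes explicit that the $u_n$ are even \emph{semantically} separated. Your closing observation --- that the depth-counting device of Lemma~\ref{l:infone} is unavailable without transitivity of $\Rh$ and must be replaced by the rigidity of the zigzag --- is precisely what motivates the paper's $\chi_n$.
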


\begin{proof}
Suppose that $\M,r\models\finftwo$ for some model $\M$ based on $\F$. First,
we define inductively three sequences $y_n$, $u_n$, $v_n$, for $n<\omega$,
of points in $\F$ such that, for every $n<\omega$,
\begin{itemize}
\item[(e)]
($y_n=r$ or $r\Rv y_n$), and $y_n \Rh v_n\Rh u_n$, 
%and $y_n\Rh u_n$,

\item[(f)]
if $n>0$, then $v_{n-1}\Rv u_n$ and $u_n\Rv v_{n-1}$,

\item[(g)]
$\M,u_n\models\diag$,

\item[(h)]
$\M,v_n\models\next\land\Bv\neg\next$.
\end{itemize}
If $n=0$, then let $y_0=r$. By \eqref{init2}, there is $u_0$ such that
$y_0\Rh u_0$ and
\begin{equation}\label{endpoint2} 
\M,u_0\models \diag\land\neg\next\land\Bh\neg\next\land\Bv\neg\next.
\end{equation}
By \eqref{hgen2}, there is $v_0$ such that $y_0\Rh v_0$ and $\M,v_0\models \next\land\Bv\neg\next$.
Thus $v_0\Rh u_0$ follows by the weak connectedness of $\Rh$ and \eqref{endpoint2}.

Now suppose that, for some $n <\omega$, $y_i$, $u_i$, $v_i$ with (e)--(h)
have already been defined for all $i\leq n$. 
By (e) and (h) of the IH,
either $y_n=r$ or $r\Rv y_n$, $y_n\Rh v_n$ and $\M,v_n\models\next\land\Bv\neg\next$. 
Also, by \eqref{vgen2} there is $u_{n+1}$ such that $v_n\Rv u_{n+1}$ and 
\begin{equation}\label{uok}
\M,u_{n+1}\models\diag\land\neg\next\land\Bh\neg\next\land\Dv\next,
\end{equation}
 and so $u_{n+1}\Rv v_n$ follows by the pseudo-transitivity of $\Rv$.
By \lcom, there is $y_{n+1}$ such that $y_n\Rv y_{n+1}\Rh u_{n+1}$. 
By the pseudo-transitivity of $\Rv$ and (e) of the IH, we have $y_{n+1}=r$ or $r\Rv y_{n+1}$.
Now by \eqref{hgen2}, there is $v_{n+1}$ such that $y_{n+1}\Rh v_{n+1}$ and 
$\M,v_{n+1}\models\next\land\Bv\neg\next$.
As $\M,u_{n+1}\models\neg\next\land\Bh\neg\next$ by \eqref{uok},
$v_{n+1}\Rh u_{n+1}$ follows by the weak connectedness of $\Rh$.

Next, we show that all the $u_n$ are different, and so $\F$ is infinite.
We show by induction on $n$ that, for all $n <\omega$,
\begin{equation}\label{diff2}
\M,u_n\models\chi_n\land\bigwedge_{i<n}\neg\chi_i,
\end{equation}
where $\chi_0=\Bv\neg\next$, and for $n>0$,
\[
\chi_n=\Dv\bigl(\next\land\Dh(\diag\land\chi_{n-1})\bigr).
\]
For $n=0$,  \eqref{diff2} holds by \eqref{endpoint2}.
 Suppose inductively that \eqref{diff2} holds for some
$n<\omega$. 
On the one hand, as $\M,u_n\models\chi_n$ by the IH, and $u_{n+1}\Rv v_n\Rh u_n$ by (e) and (f),
we have $\M,u_{n+1}\models\chi_{n+1}$
by (h) and (g).
On the other hand, as $v_n\Rv u_{n+1}$ by (f), and $\M,v_n\models\Bv\neg\next$ by (h),
by the pseudo-transitivity of $\Rv$ we have
\begin{equation}\label{qok}
\forall w\ (u_{n+1}\Rv w\ \land\ \M,w\models\next\ \to\ w=v_{n}).
\end{equation}
Also, by (e), (g), \eqref{puniq}, and the weak connectedness of $\Rh$, we have
\begin{equation}\label{pok}
\forall w\ (v_{n}\Rh w\ \land\ \M,w\models\diag\ \to\ w=u_{n}).
\end{equation}
As $\M,u_n\models\bigwedge_{i<n}\neg\chi_i$ by the IH, we obtain that 
$\M,u_{n+1}\models\bigwedge_{i<n+1}\neg\chi_i$
by \eqref{qok} and \eqref{pok}.
\end{proof}

\begin{lemma}\label{l:sattwo}
%Let $\F$ be a frame for $\kt$ that contains an $\auf\omega+1,\textit{pred}\zu$-type chain.
$\finftwo$ is satisfiable in 
$\auf\omega+1,>\zu\mprod\auf \omega,\ne\zu$.
\end{lemma}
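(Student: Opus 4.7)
The plan is to realise, inside $\F_0\mprod\F_1$ with $\F_0=\auf\omega+1,>\zu$ and $\F_1=\auf\omega,\ne\zu$, a `diagonal staircase' that instantiates the skeleton $u_n,v_n,y_n$ of the proof of Lemma~\ref{l:inftwo}. Fix a bijection $n\mapsto a_n$ between $\omega$ and $W_1$, take $r=\auf\omega,a_0\zu$, and designate the intended witnesses as $y_n=\auf\omega,a_n\zu$, $v_n=\auf n+1,a_n\zu$, and $u_n=\auf n,a_n\zu$. These clearly obey $y_n\Rh v_n\Rh u_n$, $y_n\Rv y_{n+1}$, and $v_{n-1}\Rv u_n$, simply because the second component is the complete graph on an infinite set and the first component is the reverse order on $\omega+1$.

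Next, I would set the valuation so that $\diag$ picks out exactly the $u_n$ and $\next$ picks out exactly the $v_n$:
\[
\M,\auf m,a\zu\models\diag\;\mbox{ iff }\; m<\omega\mbox{ and }a=a_m,
\qquad
\M,\auf m,a\zu\models\next\;\mbox{ iff }\; m\ge 1\mbox{ and }a=a_{m-1}.
\]
With this data, checking $\M,r\models\finftwo$ becomes a conjunct-wise verification. For \eqref{init2} the witness is $u_0$; for \eqref{hgen2} the witness from each $\auf\omega,a_n\zu$ is $v_n$; for \eqref{vgen2}, given $\next$ at $v_n$ one descends by $\Rv$ to $u_{n+1}$, whose inner clauses are read off from the valuation, with $\Dv\next$ witnessed by $v_n$ itself. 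The universal clauses $\Bv\neg\next$ and $\Bh\neg\next$ are immediate from the fact that $\next$ lives on the single `diagonal' coordinate pattern $(n+1,a_n)$.

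I expect \eqref{puniq} to require the most care, since it enforces uniqueness of $\diag$-points among points reachable in two $\Rh$-steps from the $\Rv$-cone of $r$. The key observation to extract is that a putative second $\diag$-point in column $a_n$, namely some $\auf m,a_n\zu$ with $m<n$, would force $a_n=a_m$, contradicting injectivity of the enumeration; combined with the fact that two-step $\Rh$-reachability from $\auf\omega,a_n\zu$ stays within the column $a_n$ and covers every $\auf m,a_n\zu$ with $m<\omega$, this yields \eqref{puniq} at $r$ and at every $\Rv$-successor of $r$. Together with Lemma~\ref{l:inftwo}, this completes the proof of Theorem~\ref{t:linxdiff}.
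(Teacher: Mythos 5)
Your proposal is correct and takes essentially the same approach as the paper: the valuation you define ($\diag$ on the diagonal points $\auf n,a_n\zu$, $\next$ on $\auf n+1,a_n\zu$, root $\auf\omega,a_0\zu$) is exactly the paper's model up to the bijective relabelling $n\mapsto a_n$ of the second coordinate. Your conjunct-by-conjunct verification, including the injectivity argument for \eqref{puniq}, correctly fills in what the paper leaves as ``easy to check.''
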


\begin{proof}
%Suppose $\F=\auf W,R\zu$, and and $x_n$, for $n\leq\omega$, are distinct points in $W$ such that
%$x_\omega Rx_n$ for all $n<\omega$,
%$x_{n}Rx_{n-1}$ and $x_{n-1}\neg Rx_{n}$, for all $0<n<\omega$.
%for all $n,m\leq\omega$, $n\ne m$, we have $x_n \Rh x_m$ iff $n>m$. For every $n<\omega$, we define
%the interval $(x_{n+1},x_n]$ as in \eqref{interval}.
%
%Let $r=y_0,y_1,y_2,\dots$ be an arbitrary enumeration of $W_1$. 
We define a model $\M$ over $\auf\omega+1,>\zu\mprod\auf \omega,\ne\zu$ by taking
\begin{align*}
%\M,\auf x,y\zu\models t & \quad\mbox{iff}\quad
%x\in (x_{n+1},x_n],\ n<\omega,\ \mbox{$n$ is odd},\ y\in W_1,\\
\M,\auf m,n\zu\models\diag & \qquad\mbox{iff}\qquad
m=n,\ n<\omega,\\
\M,\auf m,n\zu\models\next & \qquad\mbox{iff}\qquad
m=n+1,\ n<\omega.
\end{align*}
Then it is easy to check that $\M,\auf \omega,0\zu\models\finftwo$.
\end{proof}

Now Theorem~\ref{t:linxdiff} follows from Lemmas~\ref{l:inftwo} and \ref{l:sattwo}.
\end{proof}

%**************************************************************************************************************

\section{Discussion and open problems}\label{disc}

We showed that commutators and products with a `weakly connected component' (that is,
a component logic having only weakly connected frames) often lack the fmp.
We conclude the paper with a discussion of related results and open problems.

\smallskip
(I) First, we discuss the
\emph{decision problem} of the logics under the scope of our results:
\begin{itemize}
\item
If $L_0$ is any of
$\kft$, $\sft$, $\Log\auf\mathbb Q,<\zu$ and $L_1$ is either $\sfive$ or $\kkk$, then $L_0\mprod L_1$ is
decidable \cite{Reynolds97,Wolter99,gkwz03}. The known proofs build product models or quasimodels (two-dimensional
structures of types) from finitely many repeating small pieces (mosaics). Can mosaic-style proofs 
be used to show that the corresponding commutators are decidable?

\item
The decidability of $\Log\bigl(\{(\omega,<)\}\mprod\Fr\sfive\bigr)$ can also be shown by a mosaic-style proof
\cite{gkwz03}.
However, in \cite[Thm.6.29]{gkwz03} it is wrongly stated that this logic is the same as $\Log\,(\omega,<)\mprod\sfive$.
Unlike richer temporal languages, the unimodal language having a single $\Diamond$ (and its $\Box$)
is not capable to capture discreteness of a linear order (though, it can forbid the existence of infinite ascending
chains between any two points). In particular, $\Log\,(\omega,<)$ does have frames containing
$\auf\omega+1,>\zu$-type chains. Therefore, the formula $\finfoner$ used in the proof of Theorem~\ref{t:linxsfive}
is $\Log\,(\omega,<)\mprod\sfive$-satisfiable by Lemma~\ref{l:satoner}. However, $\finfoner$ is not
$\Log\bigl(\{(\omega,<)\}\mprod\Fr\sfive\bigr)$-satisfiable, as  by the proof of Lemma~\ref{l:infoner}, any 2-frame
with a linear first component satisfying $\finfoner$ must contain an $\auf\omega+1,>\zu$-type chain.
So  in fact  it is not known whether any of $\Log\,(\omega,<)\mprod\sfive$ or $[\Log\,(\omega,<),\sfive]$ is decidable.
Do they have the fmp?
Also, are $\glt\mprod\sfive$ and $[\glt,\sfive]$ decidable? The similar questions for $\kkk$ in place of $\sfive$
are also open.

\item
If $L$ is any bimodal logic such that $[\kft,\diff]\subseteq L$ and the product of an infinite linear order and and infinite
difference frame is a frame for $L$, then $L$ is undecidable \cite{hk13}. Can this result be generalised to the
logics in Theorem~\ref{t:linxdiff}? In particular, is $[\kft,\diff]^{\textit{lcom}}$ decidable?

\item
It is shown in \cite{Marx&Reynolds99,Reynolds&Z01} that
if both $L_0$ and $L_1$ are determined by linear frames and have frames of arbitrary size, then
$L_0\mprod L_1$ is undecidable.
These results are generalised in \cite{gkwz05a}:
If both $L_0$ and $L_1$ are determined by transitive frames and have frames of arbitrarily large depth,
then all logics between $\commut$ and $L_0\mprod L_1$ are undecidable.
\end{itemize}

\smallskip
(II)
As the formulas in \eqref{commut} of Section~\ref{intro} are Sahlqvist-formulas, the commutator of two
Sahlqvist-axiomatisable logics is always \emph{Kripke complete}. In general, this is not the case.
Several of the commutators under the scope of the undecidability results in \cite{gkwz05a} are in fact
$\Pi_1^1$-hard, even when both component logics are finitely axiomatisable (e.g.,
$[\glt,\kfour]$ and $[\Log\auf\omega,<\zu,\kfour]$ are such).
As the commutator of two finitely axiomatisable logics is clearly recursively enumerable, 
the Kripke incompleteness of these commutators follow.
It is not known, however, whether any of the commutators
$[\glt,\sfive]$, $[\glt,\kkk]$, $[\Log\auf\omega,<\zu,\sfive]$, $[\Log\auf\omega,<\zu,\kkk]$ is
Kripke complete.

\smallskip
(III)
Apart from Theorem~\ref{t:linxdiff} above, not much is known about the fmp of bimodal logics with a weakly connected
component that are \emph{properly between fusions and commutators}. Say, does the logic of two commuting
(but not necessarily confluent) $\kft$-operators have the fmp?

%**************************************************************************************************************

\paragraph{Acknowledgments.}
I am grateful to the anonymous referee for his thorough (and quick) reading of the manuscript.

\bibliographystyle{plain} 

%\bibliography{linxnofmp}

\end{document}